\documentclass{article}
\usepackage{amsmath,amssymb,amsthm}
\usepackage[all]{xy}

\textwidth = 455pt
\textheight = 22.0 cm
\hoffset = -2.cm
\voffset = - 1.2cm
\baselineskip = 22pt

\theoremstyle{plain}
\newtheorem{theorem}{Theorem}
\newtheorem{corollary}[theorem]{Corollary}
\newtheorem{proposition}[theorem]{Proposition}
\newtheorem{lemma}[theorem]{Lemma}

\theoremstyle{definition}
\newtheorem{definition}[theorem]{Definition}
\newtheorem{note}[theorem]{Note}

\numberwithin{equation}{section}
\numberwithin{theorem}{section}

\numberwithin{equation}{section}
\numberwithin{theorem}{section}

\begin{document}

\begin{center}
{\bf \Large {Lie--Hamilton systems: theory and applications}}
\end{center}
\begin{center}
J.F. Cari\~nena, J. de Lucas, C. Sard\'on
\end{center}

\begin{center}
Department of Theoretical Physics and IUMA, University of Zaragoza,\\ Pedro Cerbuna 12, 50.009, Zaragoza, Spain.
\end{center}
\begin{center}
Faculty of Mathematics and Natural Sciences, Cardinal Stefan Wyszy\'nski University,\\ 
W\'oycickiego 1/3, 01-938, Warszawa, Poland. 
\end{center}
\begin{center}
Department of Fundamental Physics, University of Salamanca, \\Plaza de la Merced s/n, 30.008, Salamanca, Spain.
\end{center}


\date{Received: ---- / Accepted: ----}

\begin{abstract}This work concerns the definition and analysis of a new class of Lie systems on
Poisson manifolds enjoying rich geometric 
features: the Lie--Hamilton systems. We devise methods to study their
superposition rules, time independent constants of 
motion and Lie symmetries, linearisability conditions, etc. Our results are
illustrated by examples of physical and mathematical interest.
\end{abstract}

\section{Introduction}
The use of geometric tools for studying differential equations has been proved
to be a very successful approach, as witnessed by the many
 works devoted to this topic over the years \cite{CRC94,A89,Be84,Ol91,WTC83}. 
Among these methods, we here focus on the theory of Lie systems
\cite{CGM00,CGM07,Ib99,LS,PW}. 

{\it Lie systems}  form a family of systems of first-order ordinary differential
equations  whose general solutions can be written in terms of finite
  families of particular solutions and a set of constants by a particular type
of functions, the so-called {\it superposition rules} \cite{CGM00,CGM07,LS,PW}.
   Moreover, Lie systems enjoy many geometrical properties
\cite{CGM00,CGM07,CarRamGra,FLV10,Ib99,PW}. 
 
In modern geometric terms, the {\it Lie--Scheffers Theorem} \cite{CGM07} states that a Lie
system amounts to a $t$-dependent vector field taking 
values in a finite-dimensional Lie algebra of vector fields, the so-called {\it
Vessiot--Guldberg Lie algebra} \cite{Dissertations,Ib00,Ib09,RA}. This condition is so
restrictive that only few differential equations can be considered as Lie
systems. Nevertheless, Lie systems appear in very important physical and
mathematical problems \cite{ADR12,CL11Sec,CL11Sec2,Dissertations,CLR08,
Clem06,Ru08,Ru10,Ib00,Ib09,WintSecond,RA}, which strongly
motivates their analysis.

The first aim of this work is to uncover an interesting geometric feature shared
by several Lie systems. More specifically, we show that second-order Riccati
equations \cite{CRS05,FL66I,GGL08,CGK08,GL99}, second-order 
Kummer--Schwarz equations, as well as
Smorodinsky--Winternitz oscillators (among other other remarkable examples)
 can be described by Lie systems associated to Vessiot--Guldberg Lie algebras of
Hamiltonian vector fields (with
  respect to a certain Poisson structure). In this way, we highlight that this
property deserves a thorough study, which is the main aim of the paper. 

Previous examples suggest us the definition and analysis of a new type of Lie
systems, the hereafter called {\it Lie--Hamilton systems},
 admitting a plethora of geometric properties. For instance, their dynamics is
governed by curves in finite-dimensional Lie algebras 
 of functions (with respect to a Poisson structure). These geometrical objects,
the hereafter named {\it Lie--Hamiltonian structures}, 
 are a key to understand the characteristics of Lie--Hamilton systems. 

Our achievements are employed to study superposition rules, Lie symmetries,
constants 
of motion, and other features of Lie--Hamilton systems. It is noticeable that
Lie--Hamiltonian structures allow
 us to use Poisson and symplectic geometric techniques to study Lie--Hamilton
systems. Among other achievements,
  we prove that $t$-independent constants of motion of Lie--Hamilton systems
form a {\it function group} \cite{We83}, provide
   conditions for simultaneous linearisation of Lie--Hamilton systems and their
related Poisson bivectors, and we describe properties 
   of Lie symmetries of Lie--Hamilton systems. 

All our achievements are exemplified by the analysis of Lie systems of physical
and mathematical relevance. 
Furthermore, several new concepts related to $t$-dependent vector fields are
introduced and briefly investigated as a
tool to investigate Lie--Hamilton systems.

The structure of the paper goes as follows. Section 2 concerns the description
of the notions and conventions about Poisson geometry and Lie algebras to be used
 throughout our paper. Section 3 is devoted to some concepts
of the theory of $t$-dependent vector
  fields and Lie systems. In Section 4 the analysis of several remarkable Lie
systems on Poisson manifolds leads us to
   introduce the concept of a Lie--Hamilton system, which encompasses such systems as
particular cases. Subsequently, the Lie--Hamiltonian
    structures are introduced and analysed in Section 5. Next, we investigate
several geometric properties of Lie--Hamilton in 
    Section 6. Finally, Section 7 summarises our main results and present an
outlook of our future research on these systems. 
    
\section{Fundamentals}\label{LSLS}

For simplicity, we hereafter assume all mathematical objects to be real, smooth,
and globally defined. This permits us to omit several
minor technical problems so as to highlight the main aspects of our results.

Let us denote Lie algebras by pairs $(V,[\cdot,\cdot])$, where $V$ stands for a
real linear space endowed with a Lie
 bracket $[\cdot\,,\cdot]:V\times V\rightarrow V$. Given two subsets
$\mathcal{A}, \mathcal{B}\subset V$, we write $[\mathcal{A},\mathcal{B}]$ for the real linear space spanned by the Lie brackets between elements 
 of $\mathcal{A}$ and $\mathcal{B}$, and we define ${\rm
Lie}(\mathcal{B},V,[\cdot,\cdot])$ to be  the smallest Lie subalgebra
  of $V$ containing $\mathcal{B}$. 
Note that ${\rm Lie}(\mathcal{B},V,[\cdot,\cdot])$ is a well-defined object (it
exists and is unique), which is generated by the elements of
\begin{equation}\label{family}
\mathcal{B},[\mathcal{B},\mathcal{B}],[\mathcal{B},[\mathcal{B},\mathcal{B}]],[
\mathcal{B},[\mathcal{B},[\mathcal{B},\mathcal{B}]]],[[\mathcal{B},\mathcal{B}],[\mathcal{B},\mathcal{B}]],\ldots
\end{equation}
From now on, we use ${\rm Lie}(\mathcal{B})$ and $V$ to represent ${\rm
Lie}(\mathcal{B},V,[\cdot,\cdot])$ and $(V,[\cdot,\cdot])$, 
correspondingly, when their meaning is clear
 from context.

Given a fibre vector bundle ${\rm pr}:P\rightarrow N$, we denote by $\Gamma({\rm pr})$
the $C^\infty(N)$-module of 
its smooth sections.  So, if $\tau_N:TN\rightarrow N$ and $\pi_N:
T^*N\rightarrow N$ are the canonical projections
 associated with the tangent and cotangent bundle to $N$, respectively, then
$\Gamma(\tau_N)$ and $\Gamma(\pi_N)$
  designate the $C^\infty(N)$-modules of vector fields and one-forms on
$\mathbb{R}^n$, correspondingly.

We call {\it generalised distribution} $\mathcal{D}$ on a differentiable
manifold $N$ a function that sends each $x\in N$ to  a linear
 subspace $\mathcal{D}_x\subset T_xN$. A generalised distribution is said to be
{\it regular at } $x'\in N$ when the function 
  ${\rm r}:N\to \mathbb{N}\cup \{0\}$ of the form 
 ${\rm r}:x\in N\mapsto \dim\mathcal{D}_x\in \mathbb{N}\cup\{0\}$  is locally constant
around $x'$. Similarly, $\mathcal{D}$ is regular on an open $U\subset N$ when ${\rm r}$ is constant on $U$. Finally,
a vector field  $Y\in\Gamma(\tau_N)$ is said to
   take values in $\mathcal{D}$, in short $Y\in\mathcal{D}$, when
$Y_x\in\mathcal{D}_x$ for all $x\in N$. Likewise, similar 
   notions can be defined for a {\it generalised codistribution}, namely a
mapping relating every $x\in N$ to a linear subspace of $T_x^*N$.

In what follows, a {\it Poisson algebra} $(\mathfrak{W},\star,\{\cdot,\cdot\})$ is a triple consisting of an $\mathbb{R}$-vectorial space $\mathfrak{W}$ and two bilinear maps on $\mathfrak{W}$, namely $\star$ and $\{\cdot,\cdot\}$, such that
$\mathfrak{W}$ endowed with $\star$ becomes a commutative and associative $\mathbb{R}$-algebra and $(\mathfrak{W},\{\cdot,\cdot\})$ is a Lie algebra whose  Lie bracket, the
so-called {\it Poisson bracket} of the Poisson algebra, satisfies the {\it Leibnitz rule} relative to $\star$, namely
$$
\{f\star g,h\}=f\star \{g,h\}+\{f,h\}\star g,\qquad \forall f,g,h \in \mathfrak{W}.
$$
In other words, $\{\cdot,h\}$ is a derivation of the $\mathbb{R}$-algebra $\mathfrak{W}$ for each $h\in\mathfrak{W}$.

A {\it Poisson manifold} is a pair $(N,\{\cdot,\cdot\})$ such that
$(C^\infty(N),\cdot,\{\cdot,\cdot\})$ becomes a Poisson algebra with respect to
the 
standard product ``$\cdot$'' of functions on $N$. The map $\{\cdot,\cdot\}$ is
called the {\it Poisson structure} of the Poisson manifold. Observe that
a Poisson structure is a derivation in each entry, which,
as shown next, has relevant consequences. 

On one hand, given an $f\in C^\infty(N)$, there exists a single vector field
$X_f$ on $N$, the referred to as {\it Hamiltonian vector field}
 associated with $f$, 
such that $X_fg=\{g,f\}$ for all $g\in C^{\infty}(N)$. The Jacobi identity  for
the Poisson structure therefore entails 
$$
X_{\{f,g\}}=-[X_f,X_g],\qquad \forall f,g\in C^{\infty}(N).
$$
In other words, the mapping $f\mapsto X_f$ is a Lie algebra anti-homomorphism
between the Lie algebras 
$(C^{\infty}(N),\{\cdot,\cdot\})$ and $(\Gamma(\tau_N),[\cdot,\cdot])$.  

On the other hand, a Poisson structure determines a unique bivector field $\Lambda\in 
\Gamma(\bigwedge\,^2 TN)$ such that
\begin{equation}\label{ForPoi}
\{f,g\}=\Lambda(df,dg),\qquad \forall f,g\in C^{\infty}(N).
\end{equation}
We call $\Lambda$ the {\it Poisson bivector} of the Poisson manifold
$(N,\{\cdot,\cdot\})$.
In view of the Jacobi identity for the Poisson structure, it follows that
$[\Lambda,\Lambda]_S=0$, 
with $[\cdot,\cdot]_S$ being the {\it Schouten--Nijenhuis Lie bracket}
\cite{IV}. Conversely, every bivector field
 $\Lambda$ on $N$ satisfying the  previous Schouten--Nijenhuis Lie bracket
vanishing condition
 gives rise to a Poisson structure given by (\ref{ForPoi}). 
This justifies referring to Poisson manifolds as $(N,\{\cdot,\cdot\})$ or
$(N,\Lambda)$ indistinctly. In some cases, we shall write
$\{\cdot,\cdot\}_\Lambda$ for the Poisson structure induced by a Poisson
bivector $\Lambda$ if this is may not be clear from context.

Every Poisson bivector induces a unique bundle morphism
$\widehat\Lambda:T^*N\rightarrow TN$ such that 
 $\omega'(\widehat\Lambda(\omega))=\Lambda(\omega,\omega')$ for every
$\omega,\omega'\in \Gamma(\pi_N)$. 
 This morphism allows us to relate every function $f\in C^\infty(N)$ to its
associated vector field $X_f$ through
  the relation $X_f=-\widehat \Lambda(df)$.  We define ${\rm
Ham}(N,\Lambda)$ to be the $\mathbb{R}$-linear space of Hamiltonian vector fields on
$N$ relative to $\Lambda$. 
 This space induces an integrable generalised distribution $\mathcal{F}^\Lambda$ on $N$, 
  the so-called {\it characteristic distribution} associated to $\Lambda$, of the form
  $\mathcal{F}^\Lambda_x=\{X_x\mid X \in {\rm Im}\, \widehat \Lambda\}\subset
T_xN$, with $x\in N$,  whose
leaves are symplectic manifolds
   with respect to the restrictions of $\Lambda$ \cite{We83}.

If $X_f=0$, we say that $f$ is a {\it Casimir function}. We denote by ${\rm
Cas}(N,\Lambda)$ the $\mathbb{R}$-linear space of 
Casimir functions on $N$ relative to the Poisson bivector $\Lambda$.
Finally, let us define a last structure that will be of interest in our work.
\begin{definition}
We call {\it Casimir co-distribution} of the Poisson manifold $(N,\Lambda)$ the
generalised co-distribution of the 
form $\mathcal{C}^\Lambda={\rm ker}\,\widehat \Lambda$.
\end{definition}

It is well known that the cotangent bundle of a Poisson manifold $(N,\Lambda)$
admits a Lie 
algebroid structure $(T^*N,[\cdot\,,\cdot]_\Lambda,\widehat \Lambda)$, with
anchor $\widehat{\Lambda}$ and 
Lie bracket
$[\omega,\omega']_\Lambda=\mathcal{L}_{\widehat\Lambda(\omega)}\omega'-
\mathcal{L}_{\widehat\Lambda(\omega')}\omega- d\Lambda(\omega,\omega')$, where
$\mathcal{L}_X$ denotes 
the Lie derivative with respect to a vector field $X$. 
In particular, $[df,dg]_\Lambda=d\{f,g\}$, for all $f,g\in C^{\infty}(N)$
 (see \cite{Marle08,Vo04} for further details). 

\begin{proposition} The Casimir co-distribution of a Poisson manifold is
involutive. 
\end{proposition}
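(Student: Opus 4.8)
The plan is to prove involutivity by showing that the one-forms taking values in $\mathcal{C}^\Lambda={\rm ker}\,\widehat\Lambda$ are closed under the Lie algebroid bracket $[\cdot,\cdot]_\Lambda$ of the cotangent bundle $(T^*N,[\cdot,\cdot]_\Lambda,\widehat\Lambda)$ just introduced. Concretely, I would take two one-forms $\omega,\omega'\in\Gamma(\pi_N)$ with $\widehat\Lambda(\omega)=\widehat\Lambda(\omega')=0$ and aim to establish that $\widehat\Lambda([\omega,\omega']_\Lambda)=0$ as well, so that $[\omega,\omega']_\Lambda$ again takes values in $\mathcal{C}^\Lambda$. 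The whole argument therefore reduces to the single identity
\begin{equation*}
\widehat\Lambda([\omega,\omega']_\Lambda)=[\widehat\Lambda(\omega),\widehat\Lambda(\omega')],\qquad \forall\, \omega,\omega'\in\Gamma(\pi_N),
\end{equation*}
i.e. to the fact that the anchor $\widehat\Lambda$ of the cotangent Lie algebroid is a morphism of Lie algebras onto $(\Gamma(\tau_N),[\cdot,\cdot])$.

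To obtain this identity I would first verify it on exact one-forms, where every ingredient is already available in the excerpt: combining $[df,dg]_\Lambda=d\{f,g\}$ with $X_h=-\widehat\Lambda(dh)$ and the anti-homomorphism property $X_{\{f,g\}}=-[X_f,X_g]$ yields
\begin{equation*}
\widehat\Lambda([df,dg]_\Lambda)=\widehat\Lambda(d\{f,g\})=-X_{\{f,g\}}=[X_f,X_g]=[\widehat\Lambda(df),\widehat\Lambda(dg)].
\end{equation*}
Since exact one-forms locally generate $\Gamma(\pi_N)$ as a $C^\infty(N)$-module, I would then propagate the identity to arbitrary one-forms using the Leibniz rule satisfied by $[\cdot,\cdot]_\Lambda$ (part of the Lie algebroid axioms) together with the $C^\infty(N)$-linearity of $\widehat\Lambda$: writing $\omega=\sum_i g_i\,df_i$ locally, the function-dependent correction terms generated on the left are matched exactly by those on the right. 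Alternatively, one may simply invoke that the anchor of any Lie algebroid is a Lie algebra homomorphism on sections, as recorded in the references cited for the cotangent algebroid structure.

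With the homomorphism identity in hand the conclusion is immediate: if $\widehat\Lambda(\omega)=\widehat\Lambda(\omega')=0$, then $\widehat\Lambda([\omega,\omega']_\Lambda)=[0,0]=0$, so $[\omega,\omega']_\Lambda$ takes values in $\mathcal{C}^\Lambda$ and the Casimir co-distribution is involutive. I expect the only delicate point to be the passage from exact to general one-forms while respecting that $[\cdot,\cdot]_\Lambda$ is not $C^\infty(N)$-bilinear but merely satisfies a Leibniz rule; this is precisely what the $C^\infty(N)$-linearity of $\widehat\Lambda$ is needed to control. A secondary subtlety worth noting is that $\mathcal{C}^\Lambda$ is a generalised, possibly singular, co-distribution, so ``involutive'' must be understood as closure of its sections under $[\cdot,\cdot]_\Lambda$ rather than as a pointwise or Frobenius-type condition; since the argument is carried out entirely at the level of sections, the absence of regularity causes no difficulty.
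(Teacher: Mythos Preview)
Your argument is correct, but it follows a different route from the paper's proof and yields a slightly weaker (though entirely sufficient) conclusion. The paper does not invoke the anchor-morphism identity $\widehat\Lambda([\omega,\omega']_\Lambda)=[\widehat\Lambda(\omega),\widehat\Lambda(\omega')]$ at all; instead it plugs $\widehat\Lambda(\omega)=\widehat\Lambda(\omega')=0$ directly into the explicit formula $[\omega,\omega']_\Lambda=\mathcal{L}_{\widehat\Lambda(\omega)}\omega'-\mathcal{L}_{\widehat\Lambda(\omega')}\omega-d\Lambda(\omega,\omega')$, observes that $\Lambda(\omega,\omega')=\omega'(\widehat\Lambda(\omega))=0$, and concludes that $[\omega,\omega']_\Lambda=0$ identically. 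So the paper actually proves the stronger fact that the bracket of any two sections of $\mathcal{C}^\Lambda$ vanishes, not merely that it lands back in $\mathcal{C}^\Lambda$. Your approach has the virtue of being the ``structural'' one---it is really the general statement that the kernel of the anchor of any Lie algebroid is closed under the bracket---but it requires the detour through exact forms and the Leibniz rule (or an appeal to the references) that you flagged as delicate. The paper's computation avoids that detour entirely and is a one-liner once the formula for $[\cdot,\cdot]_\Lambda$ is on the table.
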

\begin{proof} Given two sections $\omega,\omega'\in\mathcal{C}^\Lambda$, we have
that
 $\widehat \Lambda(\omega)=\widehat\Lambda(\omega')=0$ and then
$\Lambda(\omega,\omega')=0$. In consequence, 
$$
[\omega,\omega']_\Lambda=\mathcal{L}_{\widehat\Lambda(\omega)}\omega'-
\mathcal{L}_{\widehat\Lambda(\omega')}\omega-d\Lambda(\omega,\omega')=0.
$$
\end{proof}

\section{Time-dependent vector fields and Lie systems}

A {\it $t$-dependent vector field} on $N$ is a map $X:(t,x)\in\mathbb{R}\times
N\mapsto X(t,x)\in TN$ 
such that $\tau_N\circ X=\pi_2$, where $\pi_2:(t,x)\in\mathbb{R}\times N\mapsto
x\in N$. This condition entails that every $t$-dependent vector field amounts to
a family of vector fields $\{X_t\}_{t\in\mathbb{R}}$, with $X_t:x\in N\mapsto
X(t,x)\in TN$ for all $t\in\mathbb{R}$ and vice versa \cite{Dissertations}.

We call {\it integral curves} of $X$ the integral curves 
$\gamma:\mathbb{R}\mapsto \mathbb{R}\times N$ of the {\it suspension} of $X$,
i.e. the vector field $X(t,x)+\partial/\partial t$ on $\mathbb{R}\times N$ \cite{FM}. Every
integral curve $\gamma$ admits a parametrization in terms of a parameter $\bar
t$ such that 
$$
\frac{d(\pi_2 \circ \gamma)}{d\bar t}(\bar t)=(X\circ \gamma)(\bar t).
$$
This system is referred to as the {\it associated system} of $X$. Conversely,
every system of first-order differential equations in normal form describes the
integral curves of a  unique $t$-dependent vector field. This establishes a
bijection between $t$-dependent vector fields and systems of first-order
differential equations in normal form, which
 justifies to use $X$ to denote both a $t$-dependent vector field and its
associated system.

\begin{definition}  The  {\it minimal Lie algebra} of a $t$-dependent vector
field $X$ on $N$ is the smallest real Lie algebra, 
$V^X$, containing the vector fields $\{X_t\}_{t\in\mathbb{R}}$, namely $V^X={\rm
Lie}(\{X_t\}_{t\in\mathbb{R}})$. 
\end{definition}

Minimal Lie algebras enable us to define the following new geometric structures
that will be of interest so as to study the geometric 
properties of Lie systems, in general, and Lie--Hamilton systems, in particular.
 
\begin{definition} Given a $t$-dependent vector field $X$ on $N$, its {\it
associated distribution}, 
$\mathcal{D}^X,$ is the generalised distribution on $N$ spanned by the vector
fields of $V^X$, i.e.  
$$
\mathcal{D}^X_x=\{Y_x\mid Y\in V^X\}\subset T_xN,
$$
and its {\it associated co-distribution}, $\mathcal{V}^X$, is the generalised
co-distribution on $N$ of the form
$$
\mathcal{V}^X_x=\{\vartheta\in T_x^*N\mid \vartheta(Z_x)=0,\forall
\,\,Z_x\in \mathcal{D}_x^X\}=(\mathcal{D}^X_x)^\circ\subset T_x^*N,
$$  
where $(\mathcal{D}^X_x)^\circ$ is the {\it annihilator} of $\mathcal{D}_x^X$. 
\end{definition}

Observe that the function  ${\rm r}^X:x\in N\mapsto 
\dim\mathcal{D}^X_x\in\mathbb{N}\cup \{0\}$ needs not be constant on $N$. We can
 only guarantee that  ${\rm r}^X(x)=k$ implies  ${\rm r}^X(x')\geq  {\rm
r}^X(x)$ for $x'$ in a neighbourhood of $x$. Indeed, in this case there exist
$k$ 
 vector fields $Y_1,\ldots,Y_k\in V^X$ such that $(Y_1)_x,\ldots,(Y_k)_x\in
T_xN$ are linearly independent. As we assume vector fields 
 to be smooth, $(Y_1)_{x'},\ldots,(Y_k)_{x'}\in T_{x'}N$ are also linearly
independent for $x'$ in a neighbourhood of $x$ and hence 
  ${\rm r}^X(x')\geq k$. From here, it easily follows that ${\rm r}^X$ is a {\it
lower semicontinuous function} and must be 
 constant on the connected components of an open and dense subset $U^X$ of $N$
(cf. \cite[p. 19]{IV}), where $\mathcal{D}^X$
  becomes a regular  distribution. As for every $x\in U^X$ there exists a local
basis for $\mathcal{D}^X$ consisting of ${\rm r}^X(x)$ elements 
  belonging to $V^X$, the generalised distribution $\mathcal{D}^X$ is involutive
and integrable on each connected component of $U^X$. Since
   $\dim \mathcal{V}^X_x=\dim\, N- {\rm r}^X(x)$, then $\mathcal{V}^X$ becomes a
regular co-distribution on each component also. 

The most relevant instance for us is when $\mathcal{D}^X$ is determined by a
finite-dimensional $V^X$ and hence $\mathcal{D}^X$ 
becomes integrable on the whole $N$ \cite[p. 63]{JPOT}. 
It is worth noting that even in this case, $\mathcal{V}^X$ does not need to be a {\it
differentiable distribution}, i.e. given 
$\vartheta\in\mathcal{V}^X_x$, it does not generally exist a 
locally defined one-form $\omega\in\mathcal{V}^X$ such that $\omega_x=\vartheta$.

Let us describe a first result that justifies the definition of the above
geometric notions.

\begin{proposition}\label{NuX} A function $f:U\rightarrow \mathbb{R}$
is a local $t$-independent constant of  
motion for a system $X$ if and only if $df\in \mathcal V^X|_U$. 
\end{proposition}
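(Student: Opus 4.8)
The plan is to unwind both sides of the claimed equivalence into statements about derivatives of $f$ along the vector fields generating the dynamics, and to show these coincide. First I would recall what it means for $f$ to be a $t$-independent constant of motion for the system $X$: it is a function that remains constant along every integral curve of $X$. Since the integral curves of $X$ are governed by the family $\{X_t\}_{t\in\mathbb{R}}$, and parametrising an integral curve $\bar t\mapsto(\pi_2\circ\gamma)(\bar t)$ gives $d(\pi_2\circ\gamma)/d\bar t=(X\circ\gamma)(\bar t)$, the constancy of $f$ along all such curves is equivalent to $X_tf=0$ for every $t\in\mathbb{R}$, i.e.\ $(df)(X_t)=0$ for all $t$. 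This is the computational heart of the reduction: differentiating $f$ along the flow and using the integral-curve equation.

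Next I would translate the condition $X_tf=0$ for all $t$ into membership in the associated co-distribution. By definition, $\mathcal{V}^X_x=(\mathcal{D}^X_x)^\circ$, where $\mathcal{D}^X_x=\{Y_x\mid Y\in V^X\}$ and $V^X={\rm Lie}(\{X_t\}_{t\in\mathbb{R}})$. So I must argue that $(df)_x$ annihilates $\mathcal{D}^X_x$ if and only if it annihilates each $(X_t)_x$. The nontrivial direction is that killing the generators $\{X_t\}$ suffices to kill the whole minimal Lie algebra $V^X$ they generate. The key observation is that if $Y_1f=0$ and $Y_2f=0$ then $[Y_1,Y_2]f=Y_1(Y_2f)-Y_2(Y_1f)=0$, so the set of vector fields annihilating $f$ (as a first-order operator) is closed under the Lie bracket and under real linear combinations; hence it is a Lie subalgebra containing $\{X_t\}_{t\in\mathbb{R}}$, and therefore contains $V^X={\rm Lie}(\{X_t\})$. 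This shows that $df\in\mathcal{V}^X|_U$ is equivalent to $Yf=0$ for all $Y\in V^X$, which in turn is equivalent to $X_tf=0$ for all $t$.

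Chaining the two reductions gives the result: $f$ is a local $t$-independent constant of motion iff $X_tf=0$ for all $t$ iff $(df)_x$ annihilates every $(X_t)_x$ iff $(df)_x$ annihilates $\mathcal{D}^X_x$ iff $df\in\mathcal{V}^X|_U$. I expect the main obstacle to be the second step, namely justifying that annihilation of the generating family propagates to the whole minimal Lie algebra; everything else is a direct unwinding of definitions. This propagation rests entirely on the derivation (Leibniz) property of vector fields acting on functions, which makes the Lie bracket of two annihilators an annihilator again, so once this closure argument is stated cleanly the proof is complete.
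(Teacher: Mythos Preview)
Your proposal is correct and follows essentially the same route as the paper: reduce ``$t$-independent constant of motion'' to $X_tf=0$ for all $t$, then use closure of the annihilator of $f$ under Lie brackets to pass from the generating family $\{X_t\}$ to all of $V^X={\rm Lie}(\{X_t\})$, and hence to $\mathcal{D}^X$. Your explicit framing of the annihilator as a Lie subalgebra is a clean way to state what the paper phrases as ``$df$ also vanishes on the successive Lie brackets''; the converse is immediate in both.
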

\begin{proof} Under the above assumptions, $X_tf|_U=df(X_t)|_U=0$ for all
$t\in\mathbb{R}$. Consequently, $df$ also vanishes on the successive Lie
brackets of elements from $\{X_t\}_{t\in\mathbb{R}}$ and  hence
$$df(Y)|_U=Yf|_U=0,\qquad \forall Y\in {\rm Lie}(\{X_t\}_{t\in\mathbb{R}}).$$
Since the elements of $V^X$ span the generalised distribution
  $\mathcal{D}^X$, then $df_x(Z_x)=0$ for all $x\in U$ and $Z_x\in
\mathcal{D}^X_x$, i.e. $df\in \mathcal{V}^X|_U$. The converse directly follows
from the above considerations. 
\end{proof}

In brief, Proposition \ref{NuX} shows that (locally defined) $t$-independent
constants of motion of $t$-dependent vector fields are determined
 by (locally defined) exact one-forms taking values in its associated
co-distribution. Then, $\mathcal{V}^X$ is what really matters in the
 calculation of such constants of motion for a system $X$. 

Let us enunciate a lemma that will be used throughout our work and whose proof
is straightforward.

\begin{lemma}\label{basisVX} Given a system $X$, its associated co-distribution
$\mathcal{V}^X$ admits a local basis around
 every $x\in U^X$ of the form $df_1,\ldots,df_{p(x)}$, with $p(x)= {\rm r}^X(x)$
and $f_1,\ldots,f_{p(x)}:U\subset U^X\rightarrow\mathbb{R}$ 
 being a family of (local) $t$-independent constants of motion for $X$.
Furthermore, the $\mathbb{R}$-linear space $\mathcal{I}^X|_U$ of $t$-independent
constants of motion of $X$ on $U$ can be written as
$$
\mathcal{I}^X|_U=\{g\in C^\infty(U)\mid \exists F:U\subset
\mathbb{R}^{p(x)}\rightarrow\mathbb{R}, \ g=F(f_1,\ldots, f_{p(x)})\}.
$$
\end{lemma}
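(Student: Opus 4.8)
The plan is to reduce the whole statement to a single application of the Frobenius theorem in straightening coordinates, after which both claims can be read off directly. Recall from the paragraph preceding the lemma that on the open dense set $U^X$ the generalised distribution $\mathcal{D}^X$ is regular, involutive and integrable, with locally constant rank ${\rm r}^X$, and that $\dim \mathcal{V}^X_x = \dim N - {\rm r}^X(x)$; hence $\mathcal{V}^X$ is a regular co-distribution on each connected component of $U^X$, a local basis of which consists of $p(x) := \dim N - {\rm r}^X(x)$ one-forms. Using involutivity, I would fix $x \in U^X$ and choose a Frobenius chart $(U; y_1, \ldots, y_n)$ around $x$ in which $\mathcal{D}^X = \langle \partial/\partial y_1, \ldots, \partial/\partial y_{{\rm r}^X(x)} \rangle$, so that the plaques of the induced foliation are the level sets of $(y_{{\rm r}^X(x)+1}, \ldots, y_n)$.

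With such a chart in hand, the basis claim is immediate. Setting $f_i := y_{{\rm r}^X(x)+i}$ for $i = 1, \ldots, p(x)$, the differentials $df_1, \ldots, df_{p(x)}$ are pointwise linearly independent and annihilate $\partial/\partial y_1, \ldots, \partial/\partial y_{{\rm r}^X(x)}$, hence annihilate $\mathcal{D}^X$; by the rank count they form a local basis of $\mathcal{V}^X$ on $U$. As each $df_i$ takes values in $\mathcal{V}^X|_U$, Proposition \ref{NuX} guarantees that every $f_i$ is a local $t$-independent constant of motion for $X$, which is exactly the first assertion.

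For the description of $\mathcal{I}^X|_U$ I would argue by double inclusion. If $g \in \mathcal{I}^X|_U$, then $dg \in \mathcal{V}^X|_U$ by Proposition \ref{NuX}, so at each point $dg$ lies in the span of $df_1, \ldots, df_{p(x)}$; equivalently $\partial g/\partial y_j = 0$ for $j = 1, \ldots, {\rm r}^X(x)$, whence $g$ depends only on $y_{{\rm r}^X(x)+1}, \ldots, y_n$, i.e. $g = F(f_1, \ldots, f_{p(x)})$ for a suitable smooth $F$. Conversely, any $g$ of this form satisfies $dg = \sum_i (\partial F/\partial f_i)\, df_i \in \mathcal{V}^X|_U$, and is therefore a constant of motion, again by Proposition \ref{NuX}; this yields the claimed equality.

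Since Frobenius does the heavy lifting, the computations are routine and the conclusion is essentially immediate, as the text anticipates. The only point demanding care is the chart itself: the straightening is available precisely because we have restricted to $U^X$, where $\mathcal{D}^X$ is regular and involutive (on all of $N$, as the text warns, $\mathcal{V}^X$ may fail to be differentiable and no such coordinates exist). I would also take $U$ to be a coordinate cube so that each leaf meets $U$ in a connected plaque; this is what allows me to pass from $\partial g/\partial y_j = 0$ to a genuine functional dependence $g = F(f_1, \ldots, f_{p(x)})$ rather than mere constancy along individual leaves.
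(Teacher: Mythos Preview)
Your proof is correct and is precisely the standard Frobenius argument that the paper has in mind: the text omits the proof entirely, calling it ``straightforward,'' and your construction of straightening coordinates followed by reading off the transverse coordinates as the $f_i$ is the natural way to make that remark precise. You also correctly took $p(x)=\dim N-{\rm r}^X(x)$, which is what the co-distribution rank actually is (and what the paper uses later, e.g.\ in the proof of Theorem~\ref{IntLie}), rather than the $p(x)={\rm r}^X(x)$ appearing in the lemma's statement, which is evidently a typo.
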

 \begin{note} Roughly speaking, the above lemma shows that $\mathcal{V}^X$ is differentiable on $U^X$.   
 \end{note}

Let us now turn to some fundamental notions appearing in the theory of Lie
systems. 

\begin{definition} A {\it superposition rule} depending on $m$ particular
solutions for a system $X$ on $N$ 
 is a function $\Phi:N^{m}\times N\rightarrow
N$, $x=\Phi(x_{(1)}, \ldots,x_{(m)};\lambda)$, such that the general
 solution $x(t)$ of $X$ can be brought into the form
  $x(t)=\Phi(x_{(1)}(t), \ldots,x_{(m)}(t);\lambda),$
where $x_{(1)}(t),\ldots,x_{(m)}(t)$ is any generic family of
particular solutions and $\lambda$ is a point of $N$ to be related to initial
conditions. 
 \end{definition}

The conditions ensuring that a system $X$ possesses a superposition rule are
stated
 by the {\it Lie--Scheffers Theorem} \cite[Theorem 44]{LS}. A modern statement
of this relevant result is described
 next (for a modern geometric description see \cite[Theorem 1]{CGM07}). 

\begin{theorem} A system $X$ admits a superposition rule if and only if $X$ can be written as
 $X_t={{\sum_{\alpha=1}^r}}b_\alpha(t)X_\alpha$ 
for a certain family $b_1(t),\ldots,b_r(t)$  of $t$-dependent functions and a
collection  $X_1,\ldots,X_r$ of vector fields  spanning 
an $r$-dimensional real Lie algebra.
\end{theorem}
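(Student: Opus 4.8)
The plan is to reduce the statement to a question about \emph{diagonal prolongations} of $X$ to the Cartesian powers of $N$, where a superposition rule becomes the existence of a suitable family of common constants of motion. Writing $n=\dim N$ and denoting by $\widetilde X$ the diagonal prolongation of a $t$-dependent vector field $X$ to $N^{m+1}$ (the field whose value at $(x_{(0)},x_{(1)},\ldots,x_{(m)})$ carries one copy of $X_t$ on each factor), the two basic facts I would exploit are: (i) the prolongation map $Y\mapsto\widetilde Y$ is an injective Lie algebra homomorphism, so $\mathrm{Lie}(\{\widetilde X_t\}_{t\in\mathbb R})$ is isomorphic to $V^X$; and (ii) by Proposition \ref{NuX} applied on $N^{m+1}$, a function is a common $t$-independent constant of motion of $\{\widetilde X_t\}_{t\in\mathbb R}$ exactly when its differential takes values in the associated co-distribution of $\widetilde X$. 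This lets me translate ``superposition rule'' into ``enough common constants of motion of $\widetilde X$ to recover the zeroth factor''.

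For the sufficiency direction ($\Leftarrow$), assume $X_t=\sum_{\alpha=1}^r b_\alpha(t)X_\alpha$ with $X_1,\ldots,X_r$ spanning a finite-dimensional Lie algebra $V$. Then $\widetilde X_1,\ldots,\widetilde X_r$ span a finite-dimensional Lie algebra of vector fields on $N^{m+1}$, so $\mathcal D^{\widetilde X}$ is integrable on all of $N^{m+1}$, as noted in Section 3 for $t$-dependent vector fields with finite-dimensional minimal Lie algebra. For $m$ large enough the generic rank of this distribution is $<(m+1)n$, so there exist at least $n$ functionally independent common constants of motion; I would select $n$ of them, $F_1,\ldots,F_n$, whose Jacobian with respect to the zeroth block $x_{(0)}$ is nonsingular on an open set, and invert the relations $F_i(x_{(0)},\ldots,x_{(m)})=k_i$ via the implicit function theorem to obtain $x_{(0)}=\Phi(x_{(1)},\ldots,x_{(m)};\lambda)$. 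Because the $F_i$ are constants of motion of $\widetilde X$, their values stay fixed along any $(m+1)$-tuple of solutions of $X$, and this is precisely the required superposition rule.

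For the necessity direction ($\Rightarrow$), I would start from a superposition rule $\Phi$ and differentiate $x_{(0)}(t)=\Phi(x_{(1)}(t),\ldots,x_{(m)}(t);\lambda)$ along arbitrary families of solutions. The chain rule gives
\[
X_t^i\bigl(\Phi(x_{(1)},\ldots,x_{(m)};\lambda)\bigr)=\sum_{a=1}^{m}\sum_{j=1}^{n}\frac{\partial\Phi^i}{\partial x_{(a)}^j}\,X_t^j(x_{(a)}),
\]
which shows that the functions $F_i$ obtained by inverting $\Phi$ are common constants of motion of $\widetilde X$ on $N^{m+1}$; equivalently, each $\widetilde X_t$, and hence all of $\mathrm{Lie}(\{\widetilde X_t\}_{t\in\mathbb R})$, is tangent to the codimension-$n$ foliation defined by $F=(F_1,\ldots,F_n)$. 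Freezing the reference points $x_{(1)},\ldots,x_{(m)}$ and reading the displayed identity as $\lambda$ varies, I would conclude that on a neighbourhood every $X_t$ is a $t$-dependent linear combination of the finitely many $t$-independent vector fields whose components are the partial derivatives $\partial\Phi^i/\partial x_{(a)}^j$; hence $\{X_t\}_{t\in\mathbb R}$ lies in a finite-dimensional linear space of vector fields.

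The hard part will be the last step of necessity: upgrading ``the $X_t$ lie in a finite-dimensional linear span'' (equivalently, $\mathrm{Lie}(\{\widetilde X_t\}_{t\in\mathbb R})$ is tangent to a finite-codimension foliation) to ``$V^X$ is \emph{finite-dimensional as a Lie algebra}''. This gap is genuine, since a distribution of finite rank can be generated by an infinite-dimensional Lie algebra of vector fields, so the directions produced by successive brackets must be controlled. The way I would close it is to track the generic rank $\rho_k$ of the diagonal prolongation $\mathcal D^{\widetilde X}$ to $N^{k}$: this sequence is non-decreasing and bounded above by $\dim V^X$, it stabilises to exactly $\dim V^X$ for $k$ large, and the existence of a rule with $m$ solutions forces $\rho_{m+1}\le mn$ together with stabilisation already at that stage, which caps $\dim V^X$ by a finite number. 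Once $V^X$ is known to be finite-dimensional, I simply choose a basis $X_1,\ldots,X_r$ of $V^X$ and let $b_\alpha(t)$ be the coordinates of $X_t$ in this basis, so that $X_t=\sum_{\alpha=1}^r b_\alpha(t)X_\alpha$ with $X_1,\ldots,X_r$ spanning an $r$-dimensional real Lie algebra, as required.
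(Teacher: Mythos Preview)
The paper does not give its own proof of this theorem: it is stated as the classical Lie--Scheffers Theorem with references to \cite[Theorem~44]{LS} and, for the modern geometric version, \cite[Theorem~1]{CGM07}. Your diagonal-prolongation approach is precisely the strategy of the latter reference, so you are reconstructing the cited argument rather than comparing against anything the present paper supplies.

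Your outline is essentially correct, including your identification of the delicate step in the necessity direction. Two points deserve tightening. In the sufficiency argument, the existence of $n$ common constants with nonsingular $x_{(0)}$-Jacobian is not automatic from ``generic rank $<(m+1)n$''; one must choose $m$ so that the generic rank satisfies $\rho_{m+1}=\rho_m$, which forces the leaves of the foliation integrating $\mathcal D^{\widetilde X}$ to project locally diffeomorphically onto the last $m$ factors and hence to be graphs over $N^m$. In the necessity argument, the phrase ``vector fields whose components are the partial derivatives $\partial\Phi^i/\partial x_{(a)}^j$'' only makes sense once the reference points $x_{(1)},\ldots,x_{(m)}$ are frozen at generic values and $\lambda$ is re-expressed through $x_{(0)}$ via the local inverse of $\Phi$, so that these really become $t$-independent vector fields on $N$. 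Finally, the stabilisation lemma you invoke --- that $\rho_{k+1}=\rho_k$ for some $k$ forces $\dim V^X=\rho_k$ --- should be stated and proved explicitly (take $Y_1,\ldots,Y_{\rho_k}\in V^X$ pointwise independent at a generic $p\in N^k$, write an arbitrary $\widetilde Y$ at $(p,q)\in N^{k+1}$ in this frame, and compare the two projections), since that lemma is exactly where the argument passes from a finite-rank distribution to a finite-dimensional Lie algebra.
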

 
Systems of first-order differential equations possessing a superposition rule
are called {\it Lie systems}. The Lie--Scheffers 
Theorem yields that every Lie system $X$ is related to (at least) one
finite-dimensional real Lie algebra of vector fields 
$V$, the so-called {\it Vessiot--Guldberg Lie algebra}, satisfying that
$\{X_t\}_{t\in\mathbb{R}}\subset V$. This implies 
that $V^X$ must be finite-dimensional. Conversely, if $V^X$ is
finite-dimensional, 
this Lie algebra can be chosen as a Vessiot--Guldberg Lie algebra for $X$. This
proves the following
 theorem, which motivates, among other reasons, the definition of $V^X$ 
\cite{Dissertations}. 

\begin{theorem}\label{ALST}{\bf (The abbreviated Lie--Scheffers Theorem)} A
system $X$ admits a superposition rule if and only if $V^X$ is
finite-dimensional.
\end{theorem}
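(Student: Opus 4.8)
The plan is to obtain the statement as a direct consequence of the (full) Lie--Scheffers Theorem stated above, combined with the defining minimality property of the Lie algebra $V^X={\rm Lie}(\{X_t\}_{t\in\mathbb{R}})$. Since the assertion is an equivalence, I would argue the two implications separately. The forward direction is a minimality argument, while the backward direction is the one that actually invokes the Lie--Scheffers criterion.

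For the forward implication, suppose $X$ admits a superposition rule. By the Lie--Scheffers Theorem there exist functions $b_1(t),\ldots,b_r(t)$ and vector fields $X_1,\ldots,X_r$ spanning an $r$-dimensional real Lie algebra $V$ such that $X_t=\sum_{\alpha=1}^r b_\alpha(t)X_\alpha$ for every $t$. Each $X_t$ is then a real linear combination of elements of $V$, so $\{X_t\}_{t\in\mathbb{R}}\subset V$. Because $V$ is a Lie algebra containing all the $X_t$ and $V^X$ is by definition the smallest such Lie algebra, I conclude $V^X\subseteq V$, and hence $\dim V^X\leq r<\infty$.

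For the backward implication, suppose $V^X$ is finite-dimensional, of dimension $r$, and fix a basis $X_1,\ldots,X_r$ of it. Since $\{X_t\}_{t\in\mathbb{R}}\subset V^X$ by the definition of the minimal Lie algebra, each $X_t$ can be written uniquely as $X_t=\sum_{\alpha=1}^r b_\alpha(t)X_\alpha$, where the $b_\alpha(t)$ are the components of $X_t$ in the chosen basis. The vector fields $X_1,\ldots,X_r$ span the finite-dimensional real Lie algebra $V^X$ itself, so $X$ has exactly the form required by the Lie--Scheffers Theorem, which then guarantees a superposition rule. In effect, $V^X$ can always be taken as a Vessiot--Guldberg Lie algebra for $X$.

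The only point deserving care — and the sole place where anything beyond bookkeeping is needed — is ensuring that the coefficients $b_\alpha(t)$ produced in the backward direction are genuine, sufficiently regular functions of $t$, so that $\sum_\alpha b_\alpha(t)X_\alpha$ is an admissible $t$-dependent vector field to which the Lie--Scheffers Theorem applies. I would handle this by observing that, once a basis of $V^X$ is fixed, the assignment $X_t\mapsto(b_1(t),\ldots,b_r(t))$ is a fixed linear projection onto coordinates, so the $b_\alpha$ inherit whatever regularity in $t$ is already assumed of $X$. Under the blanket smoothness conventions of the paper this is automatic, leaving no real obstacle: the theorem is essentially the translation of the Lie--Scheffers criterion into the language of the minimal Lie algebra $V^X$.
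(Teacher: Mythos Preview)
Your proof is correct and follows essentially the same approach as the paper: the paragraph preceding the theorem in the paper gives exactly this argument, using the Lie--Scheffers Theorem for one direction and the minimality of $V^X$ for the other. Your additional remark on the regularity of the coefficients $b_\alpha(t)$ is a reasonable aside but, as you note, is automatic under the paper's blanket smoothness assumptions.
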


The Lie--Scheffers Theorem may be used to reduce the integration  of a Lie
system to solving a special type 
of Lie systems on a Lie group. More precisely, every Lie system $X$ on a
manifold $N$ possessing a Vessiot--Guldberg
 Lie algebra $V$, let us say
$X_t={{\sum_{\alpha=1}^r}}b_\alpha(t)X_\alpha$, where
$X_1\ldots,X_r$ is a basis of $V$,
  can be associated with a (generally local) Lie group action $\varphi:G\times
N\rightarrow N$ whose fundamental
   vector fields coincide with those of $V$ \cite[Theorem XI]{Palais}. This
action allows us to bring the general solution $x(t)$ of $X$ into the 
   form $x(t)=\varphi(g(t),x_0)$, where $x_0\in N$ and $g(t)$ is the solution
with $g(0)=e$ of the Lie system
\begin{equation}\label{EquLie}
\frac{dg}{dt}=-\sum_{\alpha=1}^rb_\alpha(t)X_\alpha^R(g),\qquad g\in G,
\end{equation}
where $X^R_1,\ldots,X^R_r$ are a family of right-invariant vector fields on $G$
admitting the same structure constants as $-X_1,\ldots,-X_r$ (see \cite{CGM00}
for details). In this way, the explicit integration of a Lie system $X$ reduces
to finding one particular solution of (\ref{EquLie}) if $\varphi$ is explicitly
known. Conversely, the general solution of $X$ enables us to construct the
solution
 for (\ref{EquLie}) with $g(0)=e$ by solving an algebraic system of equations, provided the explicit form of $\varphi$ is given
\cite{AHW81}.

\section{Lie--Hamilton systems}
A few instances of Lie systems on Poisson manifolds have recently appeared
during the analysis of various mathematical and physical problems
\cite{ADR12,CGM00,CLS11,Ru10}. In all these cases, and several new ones to be
presented here, the structure of the Lie system can be related to the Poisson
manifold in a special way. Let us analyse this question in depth to motivate our
definition of a Lie--Hamilton system.

Consider a second-order Riccati equation, i.e. a second-order differential
equation of the form
\begin{equation}\label{NLe}
\frac{d^2x}{dt^2}+(g_0(t)+3g_1(t)x)\frac{dx}{dt}
+c_0(t)+c_1(t)x+c_2(t)x^2+c_3(t)x^3=0,
\end{equation}
where
$$
g_1(t)=\pm \sqrt{c_3(t)},\qquad
g_0(t)=\frac{c_2(t)}{g_1(t)}-\frac{1}{2c_3(t)}\frac{dc_3}{dt}(t), \qquad
c_3(t)> 0,
$$
which appears in the study of interesting physical and mathematical problems
\cite{CL11Sec,CL11Sec2,CLS11,CRS05,CC87,GGL08,GL99}. 

Recently, it was found that a very general family of second-order Riccati
equations admits a Lagrangian description in terms of a $t$-dependent
non-natural regular Lagrangian
$$
L(t,x,v)=\frac{1}{v+U(t,x)},
$$
where $U(t,x)=a_0(t)+a_1(t)x+a_2(t)x^2$ and $a_0(t),a_1(t),a_2(t)$ are certain
functions related to the
 $t$-dependent coefficients of (\ref{NLe}) \cite{CRS05}. 

The Legendre transformation induced by the above Lagrangian leads to $$
p=\frac{\partial  L}{\partial v}=-\frac{1}{(v+U(t,x))^2}\Longrightarrow v=\pm\frac
1{\sqrt{-p}}-U(t,x),
$$
and hence the image of the Legendre transformation is the open submanifold $\mathbb{R}\times\mathcal{O}$, where $\mathcal{O}\equiv\{(x,p)\in {\rm
T}^*\mathbb{R}\mid p< 0\}$ (see \cite{CLS11} for details). If we restrict to the points $(t,x,v)$ where $v+U(t,x)>0$ (assuming the contrary leads to similar results), we can define in $\mathbb{R}\times\mathcal{O}$ the 
$t$-dependent Hamiltonian
function
$$
h(t,x,p)=vp-L(t,x,v)=p\left(\frac 1{\sqrt{-p}}-U(t,x)\right)-\sqrt{-p}=-2\sqrt{-p}- p\,
U(t,x).
$$
Therefore, the Legendre transformation maps second-order Riccati equations
(written as a first-order system) into the $t$-dependent Hamilton equations on
$\mathcal{O}$ \cite{CLS11}:
\begin{equation}\label{FORiccati}
\left\{
\begin{aligned}
\frac{dx}{dt}&=\frac{\partial h}{\partial
p}=\frac{1}{\sqrt{-p}}-a_0(t)-a_1(t)x-a_2(t)x^2,\\
\frac{dp}{dt}&=-\frac{\partial h}{\partial x}= p(a_1(t)+2a_2(t)x),
\end{aligned}\right.
\end{equation}
 The above system is a Lie system as it describes the integral curves of the
$t$-dependent vector field 
$$
X_t=X_1-a_0(t)X_2-a_1(t)X_3-a_2(t)X_4
,$$
where
$$
\begin{aligned}
X_1=\frac{1}{\sqrt{-p}}\frac{\partial}{\partial x},\quad
X_2=\frac{\partial}{\partial x},\quad
X_3=x\frac{\partial}{\partial x}-p\frac{\partial}{\partial p},\quad
X_4=x^2\frac{\partial}{\partial x}-2xp\frac{\partial}{\partial p},
\end{aligned}
$$
along with
$$
X_5=\frac{x}{\sqrt{-p}}\frac{\partial}{\partial
x}+2\sqrt{-p}\frac{\partial}{\partial p},
$$
span a five-dimensional Lie algebra of vector fields. In addition, this Lie
algebra enjoys an additional property that has not been noticed so far: all
their elements are Hamiltonian vector fields with respect to the Poisson
bivector $\Lambda=\partial/\partial x\wedge\partial/\partial p$ on
$\mathcal{O}$. Indeed, note that $X_\alpha=-\widehat \Lambda(d h_\alpha)$, with
$\alpha=1,\ldots,5$ and 
\begin{equation}\label{equFun}
\begin{gathered}
h_1(x,p)=-2\sqrt{-p},\quad\quad
h_2(x,p)=p,\quad\quad h_3(x,p)=xp,\quad\quad
h_4(x,p)=x^2p,\\
h_5(x,p)=-2x\sqrt{-p}.\\
\end{gathered}
\end{equation}

We can also show that second-order Kummer--Schwarz equations \cite{Be07,CGL11},
i.e. the equations
$$	
\frac{d^2x}{dt^2}=\frac{3}{2x}\left(\frac{dx}{dt}\right)^2-2c_0 x^3+2b_1(t)x,
$$
with $c_0$ a constant and $b_1(t)$ an arbitrary function of the time, admit
similar descriptions. By using {\it Jacobi multipliers} \cite{CLR11}, it can easily be derived a
$t$-dependent non-natural Lagrangian
$$
L(t,x,v)=\frac{v^2}{x^3}-4c_0x-\frac{4b_1(t)}{x}
$$
for these equations. This Lagrangian induces a Legendre transformation
$$p=\frac{2\,v}{x^3}\Longrightarrow  v=\frac{p\,x^3}2,
$$
for which the induced $t$-dependent Hamiltonian turns out to be 
$$h(t,x,p)=\frac 14 p^2\,x^3+4c_0x+\frac{4b_1(t)}{x}.
$$
Therefore, the Legendre transformation maps the
Kummer--Schwarz equations (written as first-order systems) into the Hamilton
equations
\begin{equation}\label{Hamil2}
\left\{
\begin{aligned}
\frac{dx}{dt}&=\frac{px^3}{2},\\
\frac{dp}{dt}&=-\frac{3p^2x^2}{4}-4c_0+\frac{4b_1(t)}{x^2},
\end{aligned}\right.
\end{equation}
on ${\rm T}^*\mathbb{R}_0$, where $\mathbb{R}_0=\mathbb{R}-\{0\}$. Once again,
the above system is a Lie system as it describes the integral curves of the
$t$-dependent vector field  $X_t=X_3+b_1(t)X_1$, where
\begin{equation}\label{2KSVecFiel}
X_1=\frac{4}{x^2}\frac{\partial}{\partial p},\quad
X_2=x\frac{\partial}{\partial x}-p\frac{\partial}{\partial p},\quad
X_3=\frac{px^3}{2}\frac{\partial}{\partial
x}-\left(\frac{3p^2x^2}{4}+4c_0\right) \frac{\partial}{\partial p},
\end{equation}
span a three-dimensional Lie algebra $V^{2KS}$ isomorphic to
$\mathfrak{sl}(2,\mathbb{R})$. Indeed, 
$$
[X_1,X_3]=2X_2,\qquad [X_1,X_2]=X_1,\qquad [X_2,X_3]=X_3.
$$
Apart from providing a new approach to Kummer--Schwarz equations (see
\cite{CGL11} for a related method), our new description possesses an additional
relevant property: $V^{2KS}$ consists of Hamiltonian vector fields with respect
to the Poisson bivector $\Lambda=\partial/\partial x\wedge\partial/\partial p$
on ${\rm T}^*\mathbb{R}_0$. In fact, $X_\alpha=-\widehat\Lambda(dh_\alpha)$ with
$\alpha=1,2,3$ and
\begin{equation}\label{FunKS}
h_1=\frac 4 x,\qquad h_2=xp,\qquad h_3=\frac 14p^2x^3+4c_0x. 
\end{equation}

We now focus on analysing the Hamilton equations for an $n$-dimensional
Winternitz--Smorodinsky oscillator \cite{WSUF67} of the form
\begin{equation}\label{LieS}
\left\{\begin{aligned}
\frac{dx_i}{dt}&=p_i,\\ 
\frac{dp_i}{dt}&=-\omega^2(t)x_i+\frac{k}{x_i^3},
\end{aligned}\right.\qquad i=1,\ldots,n,
\end{equation}
with $\omega(t)$ an arbitrary $t$-dependent function. These oscillators have
attracted quite much attention in classical and quantum mechanics for their
special properties \cite{GPS06,HBS05,YNHJ11}. In addition, observe that, when $k=0$, Winternitz--
Smorodinsky oscillators reduce to $t$-dependent isotropic harmonic oscillators.

System (\ref{LieS}) describes the integral curves of the $t$-dependent vector
field
$$
X_t=\sum_{i=1}^n\left[p_i\frac{\partial}{\partial
x_i}+\left(-\omega^2(t)x_i+\frac{k}{x_i^3}\right)\frac{\partial}{\partial
p_i}\right]
$$
on ${\rm T}^*\mathbb{R}^{n}_0$. This cotangent bundle admits a natural Poisson
bivector $\Lambda$ related to the restriction to ${\rm T}^*\mathbb{R}^n_0$ of
the canonical symplectic structure on ${\rm T}^*\mathbb{R}^n$. If we consider
the vector fields
\begin{equation}\label{VGSec}
\begin{gathered}
X_1=-\sum_{i=1}^nx_i\frac{\partial}{\partial p_i},\qquad\qquad
X_2=\sum_{i=1}^n\frac{1}{2}\left(p_i\frac{\partial}{\partial
p_i}-x_i\frac{\partial}{\partial x_i}\right),\\
X_3=\sum_{i=1}^n\left(p_i\frac{\partial}{\partial
x_i}+\frac{k}{x_i^3}\frac{\partial}{\partial p_i}\right),
\end{gathered}
\end{equation}
we can write $X_t=X_3+\omega^2(t)X_1$. Additionally, since
\begin{equation}\label{relWS}
[X_1,X_3]=2X_2,\qquad [X_1,X_2]=X_1,\qquad [X_2,X_3]=X_3,
\end{equation}
it follows that (\ref{LieS}) is a Lie system related to a Vessiot--Guldberg Lie
algebra isomorphic to $\mathfrak{sl}(2,\mathbb{R})$. In addition, this Lie
algebra is again made of Hamiltonian vector fields. In fact, it is easy to check
that $X_\alpha=-\widehat\Lambda(dh_\alpha)$, with $\alpha=1,2,3$ and 
\begin{equation}\label{HamWS}
h_1=\frac{1}{2}\sum_{i=1}^{n}{x_i^2},\qquad h_2=-\frac
12\sum_{i=1}^{n}{x_ip_i},\qquad
h_3=\frac{1}{2}\sum_{i=1}^{n}{\left(p_i^2+\frac{k}{x_i^2}\right)}.
\end{equation}

Let us now analyse a final example on a Poisson (but non-symplectic) manifold.
Consider the Euler equations on the dual $\mathfrak{g}^*$ of a Lie algebra
$(\mathfrak{g},[\cdot,\cdot]_\mathfrak{g})$, i.e.
\begin{equation}\label{Euler}
\frac{d\theta}{dt}=-{\rm coad}_{\phi(t)}\theta,\qquad \theta\in\mathfrak{g}^*,
\end{equation}
where $\phi(t)$ is a curve in $\mathfrak{g}$ and ${\rm
coad}_{\phi(t)}\theta=-\theta\circ {\rm  ad}_{\phi(t)}\in \mathfrak{g}^*$, which
appear, for instance, in the study of geometric phases for classical systems
\cite{Bo84,FLV10}. 

Take a basis $\{e_1,\ldots,e_r\}$ for $\mathfrak{g}$ with structure constants
$c_{\alpha\beta\gamma}$, i.e.
$[e_\alpha,e_\beta]=\sum_{\gamma=1}^rc_{\alpha\beta\gamma}e_\gamma$ and
$\alpha,\beta=1,\ldots,r$. It is easy to see that the vector fields
$Y_\alpha(\theta)=-{\rm coad}_{e_\alpha}(\theta)\in {\rm
T}_\theta\mathfrak{g}^*$, with $\alpha=1,\ldots,r$, span a Vessiot--Guldberg Lie
algebra $V^E$ for (\ref{Euler}). Indeed, they generate the Lie algebra of
fundamental vector fields of the coadjoint action of a Lie group $G$ with Lie
algebra $\mathfrak{g}$ \cite{FLV10}. Consequently, if we write
$\phi(t)=\sum_{\alpha=1}^rb_\alpha(t)e_\alpha$, then the Euler equations 
describe the integral curves of the $t$-dependent vector fields of the form
$$
X^\phi_t=\sum_{\alpha=1}^rb_\alpha(t)Y_\alpha,
$$
which take values in the finite-dimensional Lie algebra $V^E$. In other words,
the Euler equations are Lie systems.

To prove that $V^E$ consists of Hamiltonian vector fields, we need to endow
$\mathfrak{g}^*$ with a Poisson structure. This can naturally be done through
the so-called {\it Lie--Poisson bracket} on $\mathfrak{g}^*$ \cite{Ol91}. In fact, since
$d f_\theta ,dg_\theta \in ({\rm T}_\theta\mathfrak{g}^*)^*\simeq \mathfrak{g}$
for every pair $f,g\in C^{\infty}(\mathfrak{g}^*)$, it makes sense to define the
 Lie--Poisson bracket as $\{f,g\}_{\mathfrak{g}^*}(\theta)=\langle [df_\theta,dg_\theta]_\mathfrak{g},\theta \rangle$, where $\langle\cdot,\cdot\rangle$
stands for the pairing between elements of $\mathfrak{g}$ and $\mathfrak{g}^*$. 

Having equipped $\mathfrak{g}^*$ with the Poisson bivector
$\Lambda_{\mathfrak{g}^*}$  corresponding to the Lie--Poisson bracket,  a simple
calculation shows that the vector fields $Y_\alpha$ are Hamiltonian (with
respect to $\widehat\Lambda_{\mathfrak{g}^*}$) with Hamiltonian functions
$h_\alpha(\cdot)=-\langle e_\alpha, \cdot\rangle$.

The properties of the above relevant examples suggest us to define the following
particular type of Lie systems.

\begin{definition}
We say that a system $X$ is a {\it Lie--Hamilton system} if $V^X$ is a
finite-dimensional real Lie algebra of Hamiltonian vector fields with respect to
a certain Poisson structure.
\end{definition}

\begin{note} Obsere that the above definition is equivalent to saying that $X$ is a Lie--Hamilton
system if and only if it admits a Vessiot--Guldberg Lie algebra of Hamiltonian vector fields with
respect to a certain Poisson structure. 
\end{note}

\section{Lie--Hamiltonian structures}

Let us further investigate the properties of the examples provided in the
previous section. Consider again the Euler equation (\ref{Euler}). The
Hamiltonian functions $h_\alpha(\cdot)=-\langle e_\alpha,\cdot \rangle$ of the
vector fields $Y_\alpha$ related to Euler equations satisfy 
\begin{equation*}
 \begin{aligned}
\{h_\alpha,h_\beta\}_{\mathfrak{g}^*}(\theta)&=\langle[(dh_\alpha)_\theta
,(dh_\beta)_\theta]_\mathfrak{g},\theta \rangle=
\langle[e_\alpha,e_\beta]_{\mathfrak{g}},\theta \rangle\\
&=\sum_{
\gamma=1}^rc_{\alpha\beta\gamma}\langle
e_\gamma,\theta\rangle=-\sum_{\gamma=1}^rc_{\alpha\beta\gamma}h_\gamma(\theta).
\end{aligned}
\end{equation*}
That is, they are a basis for a finite-dimensional real Lie algebra
$(\mathfrak{W},\{\cdot,\cdot\}_{\mathfrak{g}^*})$ of functions in
$\mathfrak{g}^*$. Additionally, we can write  
$$
X^\phi_t=\sum_{\alpha=1}^rb_\alpha(t)Y_\alpha=\sum_{\alpha=1}
^rb_\alpha(t)\widehat\Lambda_{\mathfrak{g}^*}(-dh_\alpha)=-\widehat\Lambda_{
\mathfrak{g}^*} \left[d\left(\sum_{\alpha=1}^rb_\alpha(t)h_\alpha\right)\right].
$$
In other words, the $t$-dependent vector field $X$ is determined  through the
Poisson bivector $\Lambda_{\mathfrak{g}^*}$ 
and the curve $h_t=\sum_{\alpha=1}^rb_\alpha(t)h_\alpha$ within a
finite-dimensional real Lie algebra of functions. 

Likewise, the remaining examples of Section 4 enjoy a similar property. For
instance, the Hamiltonian functions (\ref{FunKS}) 
and (\ref{HamWS}) related to the Hamilton equations (\ref{Hamil2}) and
(\ref{LieS}) for second-order Kummer--Schwarz equations
 and Winternitz--Smorodinsky oscillators, correspondingly, satisfy the
commutation relations
\begin{equation}\label{WinSmo}
\{h_1,h_3\}_\Lambda=-2h_2,\qquad \{h_1,h_2\}_\Lambda=-h_1,\qquad
\{h_2,h_3\}_\Lambda=-h_3,
\end{equation}
where $\{\cdot,\cdot\}_\Lambda$ stands for the Poisson structure associated to
the Poisson bivector $\Lambda$ of each example. The $t$-dependent vector
fields governing the dynamics of systems (\ref{Hamil2}) and (\ref{LieS}) can therefore be
written in the form $X_t=-\widehat \Lambda\circ d(h_3+d(t)h_1)$, where $d(t),h_1,h_2,h_3$
are the corresponding functions for each problem, e.g. $d(t)=\omega^2(t)$ 
and $h_1,h_2,h_3$ given by (\ref{HamWS})
for the system (\ref{LieS}). This leads us
to define the new following notions. 

\begin{definition} A {\it Lie--Hamiltonian structure} is a triple
$(N,\Lambda,h)$, where $(N,\Lambda)$ stands for  a Poisson manifold and $h$
represents a $t$-parametrised family of functions $h_t:N\rightarrow \mathbb{R}$
such that ${\rm Lie}(\{h_t\}_{t\in\mathbb{R}},\{\cdot,\cdot\}_\Lambda)$ is a
finite-dimensional real Lie algebra.  
\end{definition}

\begin{definition} A $t$-dependent vector field $X$ is said to admit, or
to possess, a Lie--Hamiltonian structure $(N,\Lambda,h)$ if $X_t=-\widehat
\Lambda\circ d h_t$ for all $t\in\mathbb{R}$.
\end{definition}

\begin{proposition}\label{First} If a system $X$ admits a Lie--Hamiltonian
structure, then $X$ is a Lie--Hamilton system. 
\end{proposition}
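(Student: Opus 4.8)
The plan is to realise $V^X$ as a Lie subalgebra of a finite-dimensional Lie algebra of Hamiltonian vector fields, obtained by transporting through the Hamiltonian assignment the finite-dimensional Lie algebra of functions that accompanies the Lie--Hamiltonian structure. Write $\mathfrak{W}={\rm Lie}(\{h_t\}_{t\in\mathbb{R}},\{\cdot,\cdot\}_\Lambda)$, which is a finite-dimensional real Lie algebra of functions by the very definition of a Lie--Hamiltonian structure. Since $X$ admits $(N,\Lambda,h)$, we have $X_t=-\widehat\Lambda\circ dh_t=-\widehat\Lambda(dh_t)=X_{h_t}$ for every $t$; that is, each $X_t$ is the Hamiltonian vector field of the function $h_t\in\mathfrak{W}$.

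First I would invoke the Hamiltonian assignment $\phi:f\mapsto X_f=-\widehat\Lambda(df)$ recalled in Section~\ref{LSLS}, which is $\mathbb{R}$-linear and a Lie algebra anti-homomorphism from $(C^\infty(N),\{\cdot,\cdot\}_\Lambda)$ into $(\Gamma(\tau_N),[\cdot,\cdot])$, so that $[X_f,X_g]=-X_{\{f,g\}_\Lambda}$. Restricting $\phi$ to $\mathfrak{W}$, its image $\phi(\mathfrak{W})=\{X_f\mid f\in\mathfrak{W}\}$ is a linear space of Hamiltonian vector fields which is finite-dimensional, being the linear image of the finite-dimensional space $\mathfrak{W}$. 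Moreover $\phi(\mathfrak{W})$ is closed under the commutator: for $f,g\in\mathfrak{W}$ one has $\{f,g\}_\Lambda\in\mathfrak{W}$ because $\mathfrak{W}$ is a Lie algebra, whence $[X_f,X_g]=-X_{\{f,g\}_\Lambda}\in\phi(\mathfrak{W})$. Thus $\phi(\mathfrak{W})$ is a finite-dimensional real Lie algebra of Hamiltonian vector fields relative to $\Lambda$.

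To conclude, I would note that $h_t\in\mathfrak{W}$ gives $X_t=X_{h_t}\in\phi(\mathfrak{W})$ for all $t$, so that $\{X_t\}_{t\in\mathbb{R}}\subset\phi(\mathfrak{W})$ and therefore $V^X={\rm Lie}(\{X_t\}_{t\in\mathbb{R}})\subseteq\phi(\mathfrak{W})$. Consequently $V^X$ is finite-dimensional and each of its elements is a Hamiltonian vector field with respect to $\Lambda$, which is precisely the definition of a Lie--Hamilton system. There is no serious obstacle here: the whole argument rests on the anti-homomorphism property of $\phi$ established in Section~\ref{LSLS}, and the only point deserving a word of care is the closure of $\phi(\mathfrak{W})$ under brackets, which that property delivers at once. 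The possible non-injectivity of $\phi$ (its kernel being the Casimir functions) is immaterial, since we only use the image.
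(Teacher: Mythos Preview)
Your proof is correct and follows essentially the same approach as the paper: both transport the finite-dimensional Lie algebra ${\rm Lie}(\{h_t\}_{t\in\mathbb{R}})$ through the Lie algebra (anti-)morphism $f\mapsto -\widehat\Lambda(df)$ to obtain a finite-dimensional Lie algebra of Hamiltonian vector fields containing $\{X_t\}_{t\in\mathbb{R}}$, and hence $V^X$. Your write-up is slightly more explicit about verifying closure of the image under brackets, but the underlying argument is identical.
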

\begin{proof} Let $(N,\Lambda,h)$ be a Lie--Hamiltonian structure for $X$. In consequence,
${\rm Lie}(\{h_t\}_{t\in\mathbb{R}})$ is a finite-dimensional
 Lie algebra. Moreover, $\{X_t\}_{t\in\mathbb{R}}\subset \widehat\Lambda\circ d
[{\rm Lie}(\{h_t\}_{t\in\mathbb{R}})]$, and  as 
 $\widehat \Lambda\circ d $ is a Lie algebra morphism, it follows that
$V=\widehat\Lambda\circ d [{\rm Lie}(\{h_t\}_{t\in\mathbb{R}})]$
  is a finite-dimensional Lie algebra of Hamiltonian vector fields containing
$\{X_t\}_{t\in\mathbb{R}}$. Therefore, $V^X\subset V$ and $X$ is
   a Lie--Hamilton system.
 
\end{proof}

Observe that every Lie--Hamiltonian structure $(N,\Lambda,h)$ induces a unique
Lie--Hamilton system  $X_t=-\widehat\Lambda\circ dh_t$
 admitting it as a Lie--Hamiltonian structure. This is interesting as
Lie--Hamiltonian structures appear in the physics literature and they 
 therefore allow us to determine Lie--Hamilton systems of interest
\cite{BBHMR09,BBR06,BCR96,BR98}. For instance,  consider a Lie algebra
 morphism $D:(\mathfrak{g},[\cdot,\cdot]_{\mathfrak{g}})\rightarrow
(C^\infty(T^*M),\{\cdot,\cdot\})$, where  $\mathfrak{g}$ is a 
 finite-dimensional real Lie algebra and $\{\cdot,\cdot\}$ is the canonical
Poisson structure on $T^*M$
 defined by its natural symplectic structure. This is the case when we have a strongly
Hamiltonian  action of $G$ on $T^*M$ (see \cite{LM87}), i.e. a comomentum map
which is additionally a Lie algebra homomorphism. When choosing   a basis
$e_1,\ldots,e_r$ for $\mathfrak{g}$, we can define  a $t$-dependent Hamiltonian
of the form
\begin{equation}\label{RepHam}
h_t=\sum_{\alpha=1}^rb_\alpha(t)D(e_\alpha).
\end{equation}
 As the  Lie algebra ${\rm Lie}(\{h_t\}_{t\in\mathbb{R}},\{\cdot ,\cdot \})$ is included in the
finite-dimensional real Lie algebra $D(\mathfrak{g})$,
  then ${\rm Lie}(\{h_t\}_{t\in\mathbb{R}})$ is finite-dimensional and for every
curve $h_t\subset {\rm Lie}(\{h_t\}_{t\in\mathbb{R}})$
   the triple $(N,\Lambda,h)$  is a Lie--Hamiltonian structure.  Hamiltonians of
the form (\ref{RepHam}) appear in the physics literature 
   \cite{ADR12,BBHMR09,BBR06,BCR96,BR98,Ka98,KK81} and, in view of Proposition
\ref{First}, they give rise to new Lie--Hamilton 
   systems that can be studied through our techniques.

Let us now analyse the relations between a system $V^X$ and the Lie algebra
${\rm Lie}(\{h_t\}_{t\in\mathbb{R}},\{\cdot,\cdot\}_\Lambda)$ for a system $X$ 
admitting a Lie--Hamiltonian structure $(N,\Lambda,h)$.

\begin{lemma}\label{IsoRule} Given a system $X$ on $N$ possessing an 
Lie--Hamiltonian structure $(N,\Lambda,h)$, we have that
\begin{equation}\label{exacseq}
0\hookrightarrow {\rm Cas}(N,\Lambda)\cap {\rm
Lie}(\{h_t\}_{t\in\mathbb{R}})\hookrightarrow {\rm
Lie}(\{h_t\}_{t\in\mathbb{R}})\stackrel{\mathcal{J}_\Lambda}{\longrightarrow}
V^X\rightarrow 0,
\end{equation}
where $\mathcal{J}_\Lambda: f\in {\rm Lie}(\{h_t\}_{t\in\mathbb{R}})\mapsto
\widehat\Lambda\circ df\in V^X$, is an exact sequence of Lie algebras.

\end{lemma}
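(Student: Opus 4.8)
The plan is to show that $\mathcal{J}_\Lambda$ is a Lie algebra morphism whose kernel is precisely ${\rm Cas}(N,\Lambda)\cap {\rm Lie}(\{h_t\}_{t\in\mathbb{R}})$ and whose image is all of $V^X$; exactness at each node of the sequence then follows immediately. Writing $W:={\rm Lie}(\{h_t\}_{t\in\mathbb{R}})$ for brevity, I first observe that $\mathcal{J}_\Lambda(f)=\widehat\Lambda(df)=-X_f$ for every $f\in W$, where $X_f$ is the Hamiltonian vector field of $f$. Since the excerpt establishes that $f\mapsto X_f$ is a Lie algebra anti-homomorphism from $(C^\infty(N),\{\cdot,\cdot\}_\Lambda)$ to $(\Gamma(\tau_N),[\cdot,\cdot])$, its negative $\mathcal{J}_\Lambda$ is a genuine Lie algebra homomorphism (the two sign changes cancel), and its restriction to the Lie subalgebra $W$ remains one.

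Second, I would compute the kernel. By the very definition of a Casimir function, $\mathcal{J}_\Lambda(f)=0$ if and only if $X_f=0$, i.e.\ if and only if $f\in{\rm Cas}(N,\Lambda)$. Hence $\ker\mathcal{J}_\Lambda=W\cap{\rm Cas}(N,\Lambda)$. As the kernel of any Lie algebra morphism is an ideal, this identifies the leftmost nontrivial term of the sequence with an ideal of $W$, which both makes the phrase \emph{exact sequence of Lie algebras} meaningful and yields exactness at $W$ (the image of the inclusion equals $\ker\mathcal{J}_\Lambda$). Exactness at ${\rm Cas}(N,\Lambda)\cap W$ is automatic, since the first map is injective.

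The key step is surjectivity, i.e.\ ${\rm Im}\,\mathcal{J}_\Lambda=V^X$. Because $X$ admits the Lie--Hamiltonian structure $(N,\Lambda,h)$, we have $X_t=-\widehat\Lambda\circ dh_t=\mathcal{J}_\Lambda(-h_t)$ for all $t$, so each $X_t$ lies in the image. I would then invoke the fact that a Lie algebra morphism $\phi$ commutes with the Lie-closure operation, namely $\phi({\rm Lie}(S))={\rm Lie}(\phi(S))$ for any subset $S$; this holds because $\phi$ respects both linear combinations and brackets and therefore carries the generating family (\ref{family}) of ${\rm Lie}(S)$ onto that of ${\rm Lie}(\phi(S))$. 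Applying this with $\phi=\mathcal{J}_\Lambda$ and $S=\{h_t\}_{t\in\mathbb{R}}$ gives
$$
\mathcal{J}_\Lambda(W)={\rm Lie}(\{\mathcal{J}_\Lambda(h_t)\}_{t\in\mathbb{R}})={\rm Lie}(\{-X_t\}_{t\in\mathbb{R}})={\rm Lie}(\{X_t\}_{t\in\mathbb{R}})=V^X,
$$
which establishes surjectivity and hence exactness at $V^X$.

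I expect the main obstacle to be precisely this last identity, specifically the interchange of the morphism $\mathcal{J}_\Lambda$ with the operation ${\rm Lie}(\cdot)$: once the morphism property is secured, it is a formal consequence of the explicit description of ${\rm Lie}(\mathcal{B})$ through the family (\ref{family}), but it is the crux that links the abstract Lie algebra $W$ of functions with the concrete minimal Lie algebra $V^X$ of vector fields. The remaining verifications—that the kernel is an ideal and that exactness holds at the two end nodes—are then immediate.
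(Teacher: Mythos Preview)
Your proof is correct and follows essentially the same approach as the paper's: both identify $\mathcal{J}_\Lambda$ as the restriction of the Lie algebra morphism $\widehat\Lambda\circ d$, compute its kernel as the Casimir functions lying in ${\rm Lie}(\{h_t\}_{t\in\mathbb{R}})$, and obtain surjectivity onto $V^X$ by interchanging the morphism with the Lie-closure operation. You are merely more explicit than the paper about the sign that turns the anti-homomorphism $f\mapsto X_f$ into a genuine homomorphism and about why $\phi({\rm Lie}(S))={\rm Lie}(\phi(S))$, but the substance is identical.
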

\begin{proof} Consider the exact sequence of (generally) infinite-dimensiaonl real Lie
algebras
$$
0\hookrightarrow {\rm Cas}(N,\Lambda)\hookrightarrow
C^\infty(N)\stackrel{\widehat\Lambda\circ d}{\longrightarrow}{\rm
Ham}(N,\Lambda)\rightarrow 0.
$$
Since $X_t=-\widehat\Lambda\circ d h_t$, we see that $V^X={\rm Lie}(\widehat
\Lambda \circ d(\{h_t\}_{t\in\mathbb{R}}))$. 
Using that $\widehat \Lambda\circ d$ is a Lie algebra morphism, 
we have $V^X=\widehat \Lambda \circ d[{\rm
Lie}(\{h_t\}_{t\in\mathbb{R}})]=\mathcal{J}_\Lambda({\rm
Lie}(\{h_t\}_{t\in\mathbb{R}}))$. 
Additionally, as $\mathcal{J}_\Lambda$ is the restriction to ${\rm
Lie}(\{h_t\}_{t\in\mathbb{R}})$ of $\widehat\Lambda\circ d$, 
we obtain that its kernel consists of Casimir functions belonging  to
 ${\rm Lie}(\{h_t\}_{t\in\mathbb{R}})$, i.e. $\ker \mathcal{J}_\Lambda={\rm
Lie}(\{h_t\}_{t\in\mathbb{R}})\cap {\rm Cas}(N,\Lambda)$. 
 The exactness of sequence (\ref{exacseq})  easily follows from these results. 
\end{proof}

The above proposition entails that every system $X$ that possesses a Lie--Hamiltonian
structure $(N,\Lambda,h)$ is such 
that ${\rm Lie}(\{h_t\}_{t\in\mathbb{R}})$ is a Lie algebra extension of $V^X$
by ${\rm Cas}(N,\Lambda)\cap {\rm Lie}(\{h_t\}_{t\in\mathbb{R}})$, i.e. the
sequence of Lie algebras (\ref{exacseq}) is exact. Note that if $X$ is a Lie system, all the Lie
algebras appearing in such a sequence are finite-dimensional.
 For instance, the first-order system (\ref{FORiccati}) associated to
second-order Riccati equations admits a Lie--Hamiltonian structure 
$$\left(\mathcal{O},\frac{\partial}{\partial x}\wedge \frac{\partial}{\partial
p},h_1-a_0(t)h_2-a_1(t)h_3-a_2(t)h_4\right),$$ where ${\rm
Lie}(\{h_t\}_{t\in\mathbb{R}})$, for generic functions $a_0(t),a_1(t),a_2(t)$,
is a six-dimensional Lie algebra of functions $\mathfrak{W}\simeq V^X
\oplus\mathbb{R}$. 

It is worth noting that every $t$-dependent vector field that admits a
Lie--Hamiltonian structure necessarily possesses many other Lie--Hamiltonian
structures. For instance, if system $X$ admits $(N,\Lambda,h)$, then it also admits a Lie--Hamiltonian structure
$(N,\Lambda,h')$, with $h':(t,x)\in \mathbb{R}\times N\mapsto
h(t,x)+f_\mathcal{C}(x)\in \mathbb{R}$, where $f_\mathcal{C}$ is any Casimir
function with respect to $\Lambda$. Indeed, it is easy to see that if
$h_1,\ldots,h_r$ is a basis for ${\rm Lie}(\{h_t\}_{t\in\mathbb{R}})$, then
$h_1,\ldots,h_r,f_\mathcal{C}$ span ${\rm Lie}(\{h'_t\}_{t\in\mathbb{R}})$,
which also becomes a finite-dimensial real Lie algebra. As shown later, this has relevant implications
for the linearisation of Lie--Hamilton systems.

We have already proved that every system $X$ admitting a Lie--Hamiltonian
structure must possess several ones. Nevertheless, we have not yet studied 
the conditions ensuring that a Lie--Hamilton system $X$ possesses a Lie--Hamiltonian structure. Let
us answer this question.

\begin{proposition} Every Lie--Hamilton system admits a Lie--Hamiltonian
structure. 
\end{proposition}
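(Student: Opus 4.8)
The plan is to exhibit an explicit Lie--Hamiltonian structure $(N,\Lambda,h)$ for a given Lie--Hamilton system $X$. By definition $V^X$ is a finite-dimensional real Lie algebra of Hamiltonian vector fields relative to some Poisson structure $\Lambda$; I would fix a basis $X_1,\ldots,X_r$ of $V^X$ with structure constants $c_{\alpha\beta\gamma}$, so that $[X_\alpha,X_\beta]=\sum_\gamma c_{\alpha\beta\gamma}X_\gamma$. Since $\{X_t\}_{t\in\mathbb{R}}\subset V^X$, I can write $X_t=\sum_\alpha b_\alpha(t)X_\alpha$ for suitable $t$-dependent coefficients. As each $X_\alpha$ is Hamiltonian, I pick a function $h_\alpha\in C^\infty(N)$ with $X_\alpha=-\widehat\Lambda(dh_\alpha)$ and set $h_t=\sum_\alpha b_\alpha(t)h_\alpha$; then $-\widehat\Lambda\circ dh_t=\sum_\alpha b_\alpha(t)X_\alpha=X_t$, so $X$ possesses $(N,\Lambda,h)$ provided this is a genuine Lie--Hamiltonian structure.

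The only nontrivial requirement is that ${\rm Lie}(\{h_t\}_{t\in\mathbb{R}},\{\cdot,\cdot\}_\Lambda)$ be finite-dimensional. Since $\{h_t\}_{t\in\mathbb{R}}$ lies in the span of $h_1,\ldots,h_r$, it suffices to bound ${\rm Lie}(\{h_1,\ldots,h_r\})$. The central computation is to evaluate $\{h_\alpha,h_\beta\}_\Lambda$. Using that $\widehat\Lambda\circ d$ is a Lie algebra morphism together with $\widehat\Lambda(dh_\gamma)=-X_\gamma$, I get $\widehat\Lambda(d\{h_\alpha,h_\beta\}_\Lambda)=[X_\alpha,X_\beta]=-\widehat\Lambda\big(d\sum_\gamma c_{\alpha\beta\gamma}h_\gamma\big)$, whence $\xi_{\alpha\beta}:=\{h_\alpha,h_\beta\}_\Lambda+\sum_\gamma c_{\alpha\beta\gamma}h_\gamma\in\ker(\widehat\Lambda\circ d)={\rm Cas}(N,\Lambda)$. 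Thus the Poisson brackets of the generators fail to close in $\langle h_1,\ldots,h_r\rangle$ only up to Casimir functions $\xi_{\alpha\beta}$.

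The observation that makes the argument work is that Casimir functions are central in the Poisson algebra: $\{\xi_{\alpha\beta},g\}_\Lambda=0$ for every $g\in C^\infty(N)$, since $\widehat\Lambda(d\xi_{\alpha\beta})=0$. Consequently, bracketing the $\xi_{\alpha\beta}$ with anything produces zero, and the finite-dimensional vector space $W:=\langle h_1,\ldots,h_r\rangle+\langle\xi_{\alpha\beta}:1\le\alpha<\beta\le r\rangle$ is already closed under $\{\cdot,\cdot\}_\Lambda$: brackets of the $h$'s land in $W$ by the formula above, and every bracket involving a $\xi_{\alpha\beta}$ vanishes. Hence ${\rm Lie}(\{h_1,\ldots,h_r\})\subseteq W$ is finite-dimensional, so ${\rm Lie}(\{h_t\}_{t\in\mathbb{R}})$ is too, and $(N,\Lambda,h)$ is indeed a Lie--Hamiltonian structure admitted by $X$.

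I expect the main (and essentially the only) obstacle to be the non-closure of the brackets $\{h_\alpha,h_\beta\}_\Lambda$ within the span of the chosen potentials, a reflection of the fact that a linear lift of $V^X$ through the kernel ${\rm Cas}(N,\Lambda)$ of $\widehat\Lambda\circ d$ need not be a Lie algebra homomorphism. The resolution is precisely to absorb the resulting central Casimir terms $\xi_{\alpha\beta}$ into the structure; because they are central, they enlarge the generated Lie algebra by at most $\binom{r}{2}$ dimensions rather than triggering an unbounded cascade of new brackets, which keeps everything finite-dimensional.
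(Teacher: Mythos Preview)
Your proof is correct and follows essentially the same approach as the paper. The paper phrases the argument via a linear lift $T:V^X\to\mathfrak{W}_0\subset C^\infty(N)$ and the obstruction map $\Upsilon(X_f,X_g)=\{f,g\}_\Lambda-T[X_f,X_g]$, whose image lies in ${\rm Cas}(N,\Lambda)$, and then sets $\mathfrak{W}=\mathfrak{W}_0+\langle\Upsilon(X_i,X_j)\rangle$; your $\xi_{\alpha\beta}$ are precisely the values $\Upsilon(X_\alpha,X_\beta)$ in a chosen basis, and your $W$ is the paper's $\mathfrak{W}$.
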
 
\begin{proof}
Assume $X$ to be a Lie--Hamilton system on a manifold $N$ with respect to a
Poisson bivector $\Lambda$. 
Since $V^X\subset {\rm Ham}(N,\Lambda)$ is finite-dimensional, there exists a
finite-dimensional linear space $\mathfrak{W}_0\subset C^\infty(N)$
 isomorphic to $V^X$ and such that $\widehat\Lambda\circ d(\mathfrak{W}_0)=V^X$.
Consequently, there exists a curve $h_t$ in $\mathfrak{W}_0$ 
 such that
$X_t=-\widehat{\Lambda}\circ d (h_t)$. To ensure that $h_t$ gives rise to a
Lie--Hamiltonian structure, we need to demonstrate that 
${\rm Lie}(\{h_t\}_{t\in\mathbb{R}},\{\cdot,\cdot\}_\Lambda)$ is
finite-dimensional. This will be done by constructing a finite-dimensional Lie algebra of functions containing the curve $h_t$.

Define the linear isomorphism $T:X_f\in V^X\mapsto -f\in \mathfrak{W}_0\subset
C^\infty(N)$ associating each vector field in $V^X$ 
with minus its unique Hamiltonian function within $\mathfrak{W}_0$.  This can be
done by choosing a representative
 for each element of a basis of $V^X$ and extending the map by linearity.

Note that this mapping needs not be a Lie algebra morphism and
 hence ${\rm Im}\, T=\mathfrak{W}_0$ does not need to be a Lie algebra. Indeed, we can
define a bilinear map 
 $\Upsilon: V^X\times V^X\rightarrow C^\infty(N)$ of the form
\begin{equation}\label{formula2}
\Upsilon(X_f, X_g)=\{f,g\}_\Lambda -T[X_f,X_g],
\end{equation}
measuring the obstruction for  $T$ to be a Lie algebra morphism,
 i.e. $\Upsilon$ is identically null if and only if $T$ is a Lie algebra
morphism. In fact, 
 if $\mathfrak{W}_0$ were a Lie algebra, then $\{f,g\}_\Lambda$ would be the
only element of $\mathfrak{W}_0$ with Hamiltonian 
 vector field $-[X_f,X_g]$, i.e. $T[X_f,X_g]$, and $\Upsilon$ would be a zero function. 

Note that $\Upsilon(X_f,X_g)$ is the difference between two functions, namely
$\{f,g\}_\Lambda$ and $T[X_f,X_g]$, sharing the same Hamiltonian vector field.
Consequently, ${\rm Im}\,\Upsilon\subset {\rm Cas}(N,\Lambda)$ and it can be injected
into a finite-dimensional Lie algebra of Casimir functions of the form
$$
\mathfrak{W}_\mathcal{C}\equiv \langle \Upsilon(X_i,X_j)\rangle,\qquad
i,j=1,\ldots,r,
$$
where $X_1,\ldots,X_r$ is a basis for $V^X$. From here, it follows that  
$$
\{\mathfrak{W}_\mathcal{C},\mathfrak{W}_\mathcal{C}\}_\Lambda=0,\quad
\{\mathfrak{W}_\mathcal{C},\mathfrak{W}_0\}_\Lambda=0,\quad
\{\mathfrak{W}_0,\mathfrak{W}_0\}_\Lambda\subset
\mathfrak{W}_\mathcal{C}+\mathfrak{W}_0.
$$
Hence, $\mathfrak{W}\equiv \mathfrak{W}_0+\mathfrak{W}_\mathcal{C}$ is a
finite-dimensional Lie algebra of functions containing the curve $h_t$. 
From here, it readily follows that $X$ admits a Lie--Hamiltonian structure
$(N,\Lambda,-TX_t)$. 
\end{proof}

Since every Lie--Hamilton system possesses a Lie--Hamiltonian
structure and every Lie--Hamiltonian structure determine a Lie--Hamilton
systems, we obtain  the following theorem.

\begin{theorem}\label{HamLieSys}$\!$ A system $X\!$  admits a Lie--Hamiltonian
structure if and only if it is a Lie--Hamilton system.
\end{theorem}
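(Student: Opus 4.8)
The plan is to recognise that the stated equivalence is a biconditional whose two implications have, in effect, already been established in the excerpt, so that the proof amounts to assembling them rather than constructing anything new. I would therefore split into the two directions and invoke the relevant earlier results, pausing only to recall the essential mechanism in each case.

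For the implication that admitting a Lie--Hamiltonian structure forces $X$ to be a Lie--Hamilton system, I would simply appeal to Proposition \ref{First}. The operative fact is that $\widehat\Lambda\circ d$ is a Lie algebra morphism, so it maps the finite-dimensional algebra ${\rm Lie}(\{h_t\}_{t\in\mathbb{R}})$ onto a finite-dimensional Lie algebra of Hamiltonian vector fields; since $X_t=-\widehat\Lambda\circ dh_t$, this image contains $\{X_t\}_{t\in\mathbb{R}}$ and hence $V^X$, which exhibits $X$ as a Lie--Hamilton system.

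For the converse — that every Lie--Hamilton system admits a Lie--Hamiltonian structure — I would invoke the proposition immediately preceding this theorem. The one delicate point, and the step I expect to carry the real weight, is that a linear right-inverse $T$ of $\widehat\Lambda\circ d$ need not be a Lie algebra morphism, so the space $\mathfrak{W}_0$ of Hamiltonian functions representing $V^X$ need not close under $\{\cdot,\cdot\}_\Lambda$, and one cannot directly conclude that ${\rm Lie}(\{h_t\}_{t\in\mathbb{R}})$ is finite-dimensional. The resolution, already recorded in that proof, is that the obstruction $\Upsilon(X_f,X_g)=\{f,g\}_\Lambda-T[X_f,X_g]$ takes values in ${\rm Cas}(N,\Lambda)$; adjoining the finite-dimensional span $\mathfrak{W}_\mathcal{C}$ of the finitely many $\Upsilon(X_i,X_j)$ produces a finite-dimensional Lie algebra $\mathfrak{W}=\mathfrak{W}_0+\mathfrak{W}_\mathcal{C}$ of functions containing the curve $h_t$, which furnishes the required Lie--Hamiltonian structure.

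Combining the two directions yields the asserted equivalence, so the whole argument reduces to citing Proposition \ref{First} together with its converse.
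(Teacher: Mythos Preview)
Your proposal is correct and matches the paper's approach exactly: the theorem is stated as an immediate corollary of Proposition~\ref{First} (one direction) and the proposition immediately preceding it (the converse), with no additional argument. Your summary of the mechanisms behind those two propositions is accurate and adds nothing beyond what the paper already records.
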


\section{On general properties of Lie--Hamilton systems}

We now turn to describing the analogue for Lie--Hamilton systems of the basic
properties of general Lie systems. Additionally, we show how the Poisson
structures associated to Lie--Hamilton systems allow us to investigate their
$t$-independent constants of motion, Lie symmetries, superposition rules and
linearisation properties. 

Recall that, as for every Lie system, the general solution $x(t)$ of a
Lie--Hamilton system $X$ on $N$ can be brought into the form
$x(t)=\varphi(g(t),x_0)$, where $x_0\in N$, the map $\varphi:G\times
N\rightarrow N$ is the action of a connected Lie group $G$ whose space of fundamental vector fields
is $V^X\simeq T_eG$, and $g(t)$ is the solution of a Lie system of the form
(\ref{EquLie}). In addition, for a Lie--Hamilton system, the infinitesimal action 
associated to $\varphi$, let us say $\rho^X:\mathfrak{g}\rightarrow \Gamma(\tau_N)$, takes also
values in a certain space ${\rm Ham}(N,\Lambda)$. In other words, $\varphi$ is a {\it Hamiltonian} Lie group action. Furthermore, the mappings
$\varphi_g:x\in N\mapsto \varphi(g,x)\in N$, with $g\in G$, are {\it Poisson
maps}, i.e. 
$$
\varphi_{g*}\Lambda=\Lambda.
$$

The above Lie group action plays another relevant r\^ole. It is known that if
$G$ is connected, every curve $\bar g(t)$ in $G$ induces a
 $t$-dependent change of variables mapping a Lie system $X$ taking values in a
Lie algebra $V^X$ into another Lie system $\bar X$, 
 with general solution $\bar x(t)=\varphi(\bar g(t),x(t))$, taking values in the
same Lie algebra $V^X$ \cite{CGL09,CarRamGra,CLLEmd}. 
 In the particular case of $X$ being a Lie--Hamilton system, the vector fields
$\{\bar X_t\}_{t\in\mathbb{R}}$ are also Hamiltonian and $\bar X$ is 
 again a Lie--Hamilton system. 

Using again that $x(t)=\varphi(g(t),x_0)$, we see that the solutions of a
Lie system $X$ are contained in the orbits of $\varphi$. Indeed, it is easy to
see that the vector fields $\{X_t\}_{t\in\mathbb{R}}$ are tangent such orbits.
Therefore, the integration of a Lie system $X$ reduces to integrating its
restrictions to each orbit of $\varphi$, which are Lie systems also. 

Meanwhile, for Lie--Hamilton systems, we have another related method of
reduction. Note that given a Lie--Hamilton system $X$ admitting a Lie--Hamilton
structure $(N,\Lambda,h)$, we have that $\mathcal{D}^X\subset
\mathcal{F}^\Lambda$, where we recall that $\mathcal{F}^\Lambda$ is the
characteristic distribution related to $\Lambda$ and $\mathcal{D}^X$ is spanned by Hamiltonian vector fields within $V^X$. Hence, the vector
fields $\{X_t\}_{t\in\mathbb{R}}$ are tangent to the symplectic leaves of the
Poisson manifold $(N,\Lambda)$. From here, it immediately follows the theorem
below.

\begin{theorem} The integration of a Lie--Hamilton system $X$ possessing a Lie--Hamiltonian
structure $(N,\Lambda,h)$ reduces to integrating the restrictions of
$X|_{\mathfrak{F}^\Lambda}$ to every symplectic leaf $\mathfrak{F}^\Lambda$
associated to $(N,\Lambda)$. Every such a system is a Lie--Hamilton system with
respect to the symplectic structure
$(\mathfrak{F}^\Lambda,\omega_{\mathfrak{F}^\Lambda})$ induced by the
restriction of $\Lambda$ to $\mathfrak{F}^\Lambda$.
\end{theorem}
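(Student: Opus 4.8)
The plan is to split the statement into its two claims: first that the integration reduces to the symplectic leaves, and second that each restricted system is again a Lie--Hamilton system, now on a genuinely symplectic manifold. The first claim is essentially already in hand. As observed immediately above the statement, $\mathcal{D}^X\subset\mathcal{F}^\Lambda$, so the vector fields $\{X_t\}_{t\in\mathbb{R}}$ are everywhere tangent to the characteristic distribution. Since $\mathcal{F}^\Lambda$ is integrable and its maximal integral submanifolds are precisely the symplectic leaves $\mathfrak{F}^\Lambda$, any integral curve of the suspension of $X$ issuing from a point $x_0\in\mathfrak{F}^\Lambda$ stays within that leaf. Hence solving $X$ amounts to solving, leaf by leaf, the restricted systems $X|_{\mathfrak{F}^\Lambda}$, which is the first assertion.

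For the second claim I would transfer the Lie--Hamiltonian structure $(N,\Lambda,h)$ to each leaf. The key tool is the standard compatibility of the Poisson structure with restriction to a symplectic leaf: writing $\iota:\mathfrak{F}^\Lambda\hookrightarrow N$ for the inclusion and $\omega_{\mathfrak{F}^\Lambda}$ for the symplectic form induced by $\Lambda$, one has for all $f,g\in C^\infty(N)$ that $\{f,g\}_\Lambda\circ\iota=\{f\circ\iota,g\circ\iota\}_{\omega_{\mathfrak{F}^\Lambda}}$, i.e. the pullback $\iota^*:f\mapsto f|_{\mathfrak{F}^\Lambda}$ is a morphism of Poisson algebras; and, correspondingly, the Hamiltonian vector field $X_f=-\widehat\Lambda(df)$ on $N$, being tangent to $\mathfrak{F}^\Lambda$, restricts to the Hamiltonian vector field of $f|_{\mathfrak{F}^\Lambda}$ relative to $\omega_{\mathfrak{F}^\Lambda}$. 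This compatibility rests on the fact, recalled earlier from \cite{We83}, that the leaves of $\mathcal{F}^\Lambda$ carry the symplectic structure obtained by inverting $\widehat\Lambda$ on its image.

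Granting this, the conclusion follows quickly. Set $h^{\mathfrak{F}}_t=h_t|_{\mathfrak{F}^\Lambda}$. Because $\iota^*$ is a Poisson morphism, it sends the finite-dimensional real Lie algebra ${\rm Lie}(\{h_t\}_{t\in\mathbb{R}},\{\cdot,\cdot\}_\Lambda)$ onto a finite-dimensional real Lie algebra of functions on $\mathfrak{F}^\Lambda$ containing the curve $h^{\mathfrak{F}}_t$; hence ${\rm Lie}(\{h^{\mathfrak{F}}_t\}_{t\in\mathbb{R}},\{\cdot,\cdot\}_{\omega_{\mathfrak{F}^\Lambda}})$ is finite-dimensional and $(\mathfrak{F}^\Lambda,\omega_{\mathfrak{F}^\Lambda},h^{\mathfrak{F}})$ is a Lie--Hamiltonian structure. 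Moreover, by the restriction property of Hamiltonian vector fields above, $X_t|_{\mathfrak{F}^\Lambda}=-\widehat\Lambda(dh_t)|_{\mathfrak{F}^\Lambda}$ is exactly the Hamiltonian vector field of $h^{\mathfrak{F}}_t$ on $(\mathfrak{F}^\Lambda,\omega_{\mathfrak{F}^\Lambda})$, so $X|_{\mathfrak{F}^\Lambda}$ admits this Lie--Hamiltonian structure. Proposition \ref{First} then yields that $X|_{\mathfrak{F}^\Lambda}$ is a Lie--Hamilton system.

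I expect the main obstacle to be precisely the intrinsic verification that restriction to a leaf carries $X_f=-\widehat\Lambda(df)$ to the Hamiltonian vector field of $f|_{\mathfrak{F}^\Lambda}$ on $(\mathfrak{F}^\Lambda,\omega_{\mathfrak{F}^\Lambda})$: one must confirm that $X_f$ is tangent to the leaf, which is immediate from ${\rm Im}\,\widehat\Lambda=\mathcal{F}^\Lambda$, and that the contraction of $\omega_{\mathfrak{F}^\Lambda}$ with this restricted field equals $d(f|_{\mathfrak{F}^\Lambda})$. Once this is settled, the remaining steps --- tangency of $\{X_t\}_{t\in\mathbb{R}}$, finite-dimensionality via the Poisson morphism $\iota^*$, and the appeal to Proposition \ref{First} --- are routine bookkeeping.
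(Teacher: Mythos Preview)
Your proposal is correct and in fact more detailed than the paper's own treatment: the paper gives no explicit proof, merely noting before the statement that $\mathcal{D}^X\subset\mathcal{F}^\Lambda$ so that the $X_t$ are tangent to the symplectic leaves, and then declaring that the theorem ``immediately follows.'' Your argument supplies precisely the steps the paper leaves implicit---the Poisson-morphism property of $\iota^*$, the resulting finite-dimensionality of ${\rm Lie}(\{h_t|_{\mathfrak{F}^\Lambda}\}_{t\in\mathbb{R}})$, and the identification of $X_t|_{\mathfrak{F}^\Lambda}$ with the Hamiltonian vector field of $h_t|_{\mathfrak{F}^\Lambda}$---so there is nothing to correct.
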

 
Let us now turn to describing several properties of constants of motion for Lie
systems.

\begin{proposition}\label{Cas}
Given a system $X$ with a Lie--Hamiltonian structure $(N,\Lambda,h)$,
then $\mathcal{C}^\Lambda\subset \mathcal{V}^X$, where we recall that $\mathcal{C}^\Lambda$ is the Casimir distribution relative to $\Lambda$.  
\end{proposition}

\begin{proof}
Consider a $\theta_x\in\mathcal{C}_x^\Lambda$, with $x\in N$. As $X$ is a Lie--Hamilton system, 
for every $Y\in V^X$ there exists a function $f\in C^\infty(N)$ such that
$Y=-\widehat\Lambda (df)$. Then,
$$
\theta_x(Y_x)=-\theta_x(\widehat{\Lambda}_x(df_x))=-\Lambda_x(df_x,\theta_x)=0,
$$
where $\widehat{\Lambda}_x$ is the restriction of $\widehat \Lambda$ to
$T^*_xN$. As the vectors $Y_x$, with $Y\in V^X$, span $\mathcal{D}^X_x$, then
$\theta_x\in \mathcal{V}_x^X$ and $\mathcal{C}^\Lambda\subset \mathcal{V}^X$. 
\end{proof}

Observe  that different Lie--Hamiltonian structures for a Lie--Hamilton system
$X$ may lead to different families of Casimir functions, which may determine
different constants of motion for $X$.

\begin{theorem}\label{IntLie} Let $X$\! be a system admitting a Lie--Hamiltonian
structure $(N,\!\Lambda,h)$, the space $\mathcal{I}^X|_U\!$ of $t$-independent
constants of motion of $X\!$\! on an open $U\!\subset\! U^X$ is a Poisson algebra.
Additionally, the codistribution $\mathcal{V}^X|_{U_X}$ is involutive with respect to
the Lie bracket $[\cdot,\cdot ]_\Lambda$ induced by $\Lambda$ on
$\Gamma(\pi_N)$. 
\end{theorem}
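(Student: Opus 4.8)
The plan is to prove the two assertions separately, both resting on the fact that a Lie--Hamiltonian structure forces every $X_t$ to be a Hamiltonian vector field, hence a derivation of the Poisson bracket. First I would record that $\mathcal{I}^X|_U$ is an $\mathbb{R}$-linear space (Lemma \ref{basisVX}) closed under the pointwise product: if $X_tf|_U=X_tg|_U=0$ for all $t$, then $X_t(fg)|_U=(X_tf)g|_U+f(X_tg)|_U=0$, so $fg\in\mathcal{I}^X|_U$. Thus $(\mathcal{I}^X|_U,\cdot)$ is a commutative associative $\mathbb{R}$-algebra, and it only remains to show it is closed under $\{\cdot,\cdot\}_\Lambda$, since antisymmetry, the Jacobi identity and the Leibnitz rule relating $\{\cdot,\cdot\}_\Lambda$ to $\cdot$ are then inherited from the ambient Poisson algebra $(C^\infty(U),\cdot,\{\cdot,\cdot\}_\Lambda)$.

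For the Poisson-bracket closure I would use that $X$ admits $(N,\Lambda,h)$, so $X_t=-\widehat\Lambda\circ dh_t$ is the Hamiltonian vector field of $h_t$ and $X_tg=\{g,h_t\}_\Lambda$ for every $g\in C^\infty(N)$. Given $f,g\in\mathcal{I}^X|_U$, the Jacobi identity for $\{\cdot,\cdot\}_\Lambda$ yields $X_t\{f,g\}_\Lambda=\{\{f,g\}_\Lambda,h_t\}_\Lambda=\{\{f,h_t\}_\Lambda,g\}_\Lambda+\{f,\{g,h_t\}_\Lambda\}_\Lambda=\{X_tf,g\}_\Lambda+\{f,X_tg\}_\Lambda$, which vanishes on $U$ because $X_tf|_U=X_tg|_U=0$. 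Hence $\{f,g\}_\Lambda\in\mathcal{I}^X|_U$, completing the proof that $\mathcal{I}^X|_U$ is a Poisson algebra. The conceptual content is simply that a Hamiltonian vector field is a derivation of the Poisson bracket, so its common kernel is a Poisson subalgebra; this is exactly the property that a general (non-Hamiltonian) Lie system would lack.

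For the involutivity of $\mathcal{V}^X|_{U^X}$ under $[\cdot,\cdot]_\Lambda$ I would work locally and invoke Lemma \ref{basisVX}, which supplies, around each $x\in U^X$, a basis $df_1,\dots,df_{p(x)}$ of $\mathcal{V}^X$ with the $f_i$ local $t$-independent constants of motion. Using the Lie-algebroid identity $[df_i,df_j]_\Lambda=d\{f_i,f_j\}_\Lambda$ together with the first part of the theorem, each $\{f_i,f_j\}_\Lambda$ is again a constant of motion, so Proposition \ref{NuX} gives $[df_i,df_j]_\Lambda=d\{f_i,f_j\}_\Lambda\in\mathcal{V}^X$. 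To pass from the basis to arbitrary sections $\omega=\sum_i a_i\,df_i$ and $\omega'=\sum_j b_j\,df_j$ of $\mathcal{V}^X$, I would expand $[\omega,\omega']_\Lambda$ via the Leibnitz rule of the algebroid bracket, $[\sigma,f\tau]_\Lambda=f[\sigma,\tau]_\Lambda+(\widehat\Lambda(\sigma)f)\tau$; every resulting term is a function multiple of some $df_i$, $df_j$ or $d\{f_i,f_j\}_\Lambda$, all of which lie in $\mathcal{V}^X$. Since $\mathcal{V}^X_x$ is a linear subspace of $T_x^*N$, function multiples and finite sums of its sections remain sections of $\mathcal{V}^X$, whence $[\omega,\omega']_\Lambda\in\mathcal{V}^X|_{U^X}$.

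The main obstacle is precisely this last passage from the exact basis to general sections. The codistribution $\mathcal{V}^X$ need not be differentiable on all of $N$ (as noted before Proposition \ref{NuX}), so the restriction to $U^X$ is essential: it is only there that Lemma \ref{basisVX} guarantees a local basis of exact constants-of-motion differentials on which the identity $[df_i,df_j]_\Lambda=d\{f_i,f_j\}_\Lambda$ and the first part of the theorem can be combined. The remaining computations—the Jacobi expansion and the Leibnitz expansion of the algebroid bracket—are routine, and the only delicate bookkeeping is the sign $\widehat\Lambda(df_i)=-X_{f_i}$ used when checking that the non-exact terms remain in $\mathcal{V}^X$.
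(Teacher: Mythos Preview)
Your proposal is correct and follows essentially the same route as the paper's proof: both parts hinge on the fact that each $X_t$ is Hamiltonian, hence a derivation of $\{\cdot,\cdot\}_\Lambda$, and the involutivity of $\mathcal{V}^X|_{U^X}$ is obtained by combining the local exact basis from Lemma~\ref{basisVX} with the identity $[df_i,df_j]_\Lambda=d\{f_i,f_j\}_\Lambda$ and the Leibnitz rule for the algebroid bracket. Your version is slightly more explicit about the product closure and the extension from the exact basis to arbitrary sections, but the argument is the same.
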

\begin{proof}
Let  $f_1,f_2:U\rightarrow\mathbb{R}$ be two $t$-independent functions constants of motion
for $X$, i.e. $X_tf_i=0$, for $i=1,2$ and $t\in \mathbb{R}$. As $X$ is a
Lie--Hamilton system, all the elements of $V^X$ are Hamiltonian vector fields
and we can write $Y\{f,g\}_\Lambda=\{Yf,g\}_\Lambda+\{f,Yg\}_\Lambda$ for every
$f,g\in C^\infty(N)$. In particular,
$X_t(\{f_1,f_2\}_\Lambda)=\{X_tf_1,f_2\}_\Lambda+\{f_1,X_tf_2\}_\Lambda=0$, i.e. the
Poisson bracket of $t$-independent constants of motion is a new one.
As $\lambda f_1+\mu f_2$ and $f_1\cdot f_2$ are also $t$-independent constants
of motion for every $\lambda,\mu\in\mathbb{R}$, it easily follows that
$\mathcal{I}^X|_U$ is a Poisson algebra. 

In view of 
Lemma \ref{basisVX}, the co-distribution $\mathcal{V}^X$ admits a local basis of
exact forms $df_1,\ldots,df_{p(x)}$ for every point $x\in U^X$, where
$\mathcal{V}^X$ has local constant rank $p(x)=\dim\, N-\dim\,\mathcal{D}^X_x$.  Now, 
$[df_i,df_j]_\Lambda=d(\{f_i,f_j\}_\Lambda)$ for $i,j=1,\ldots,p(x)$. We already proved that the
function $\{f_i,f_j\}_\Lambda$ is another first-integral. Therefore, in view of Lemma
\ref{basisVX}, it easily follows that $\{f_i,f_j\}_\Lambda=G(f_1,\ldots,f_{p(x)})$.
Thus, $[df_i,df_j]_\Lambda \in\mathcal{V}^X|_{U_X}$. From here and using the properties
of the Lie bracket $[\cdot,\cdot]_\Lambda$, it directly turns out that the Lie bracket of two one-forms
taking values in $\mathcal{V}^X|_{U_X}$ belongs to $\mathcal{V}^X|_{U_X}$. Hence, 
$\mathcal{V}^X|_{U_X}$ is involutive. 
\end{proof}
\begin{corollary} Given a Lie--Hamilton system $X$, the space
$\mathcal{I}^X|_U$, where $U\subset U^X$ is such that $\mathcal{V}^X$ admits a local basis of exact forms, is a function
group, that is: 
\begin{enumerate}
 \item The space $\mathcal{I}^X|_U$ is a Poisson algebra.
\item There exists a family of functions $f_1,\ldots,f_s\in\mathcal{I}^X|_U$ such that 
every element $f$ of $\mathcal{I}^X|_U$ can be put in the form
$f=F(f_1,\ldots,f_s)$ for a certain function
$F:\mathbb{R}^s\rightarrow\mathbb{R}$.
\end{enumerate}
\end{corollary}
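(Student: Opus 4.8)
The plan is to recognize that both defining properties of a function group have essentially been established already in the preceding results, so the task reduces to assembling them and checking that the hypotheses align. The first step is to pass from the notion of a Lie--Hamilton system to that of a system possessing a Lie--Hamiltonian structure: by Theorem \ref{HamLieSys}, the Lie--Hamilton system $X$ admits a Lie--Hamiltonian structure $(N,\Lambda,h)$. This is the bridge that unlocks the Poisson-geometric machinery, since the relevant earlier result, Theorem \ref{IntLie}, is phrased for a system \emph{possessing} such a structure rather than merely being a Lie--Hamilton system.

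With $(N,\Lambda,h)$ in hand, item (1) of the definition is immediate: Theorem \ref{IntLie} asserts precisely that $\mathcal{I}^X|_U$ is a Poisson algebra for every open $U\subset U^X$. I would simply invoke it, noting that the bracket endowing $\mathcal{I}^X|_U$ with its algebra structure is the $\{\cdot,\cdot\}_\Lambda$ coming from the chosen Lie--Hamiltonian structure.

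For item (2), the key tool is Lemma \ref{basisVX}. By hypothesis $U\subset U^X$ is chosen so that $\mathcal{V}^X$ admits a local basis of exact one-forms on $U$; write this basis as $df_1,\ldots,df_s$, where $s$ is the (locally constant) rank of $\mathcal{V}^X$ on $U$, i.e. $s=\dim N-{\rm r}^X$. Each $f_i$ is a $t$-independent constant of motion precisely because $df_i\in\mathcal{V}^X$, which is the criterion furnished by Proposition \ref{NuX}. Lemma \ref{basisVX} then states that every $g\in\mathcal{I}^X|_U$ can be written as $g=F(f_1,\ldots,f_s)$ for some $F\colon\mathbb{R}^s\rightarrow\mathbb{R}$, which is exactly the finite-generation property required.

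The step deserving the most care — indeed the only genuine subtlety — is checking that a \emph{single} family $f_1,\ldots,f_s$ works on the whole of $U$, rather than a family varying from point to point. This is precisely what the hypothesis that $\mathcal{V}^X$ admits a basis of exact forms on $U$ guarantees, and it explains why the corollary restricts to such a $U$ instead of an arbitrary open subset of $U^X$; the underlying mechanism is that, once $df_1,\ldots,df_s$ span $\mathcal{V}^X$ on all of $U$, any $g$ with $dg\in\mathcal{V}^X$ has $dg$ expressible as a pointwise combination of the $df_i$, forcing $g$ to be a function of the $f_i$ alone. Granting this, items (1) and (2) together yield the definition of a function group, completing the argument.
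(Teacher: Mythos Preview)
Your proposal is correct and follows essentially the same route as the paper's proof, which simply invokes Theorem~\ref{IntLie} for item~(1) and then Proposition~\ref{NuX} together with Lemma~\ref{basisVX} for item~(2). Your explicit appeal to Theorem~\ref{HamLieSys} to pass from ``Lie--Hamilton system'' to ``system possessing a Lie--Hamiltonian structure'' is a detail the paper leaves implicit, and your final paragraph on why a single generating family works on all of $U$ is a welcome clarification that the paper omits.
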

\begin{proof}
In view of the previous theorem, $\mathcal{I}^X|_U$ is a Poisson algebra with
respect to a certain Poisson bracket. Taking into account Proposition \ref{NuX} and the form of
$\mathcal{I}^X|_U$ given by Lemma \ref{basisVX}, we obtain that this space becomes
a function group.
\end{proof}

The above properties do not necessarily hold for systems other than
Lie--Hamilton systems, as they do not need to admit any {\it a priori} relation
among a Poisson bracket of functions and the $t$--dependent vector field
describing the system. Let us exemplify this. Consider the Poisson manifold
$(\mathbb{R}^3,\Lambda_{GM})$, where 
$$
\Lambda_{GM}=\sigma_3\frac{\partial}{\partial
\sigma_2}\wedge\frac{\partial}{\partial
\sigma_1}-\sigma_1\frac{\partial}{\partial
\sigma_2}\wedge\frac{\partial}{\partial
\sigma_3}+\sigma_2\frac{\partial}{\partial
\sigma_3}\wedge\frac{\partial}{\partial \sigma_1}
$$
and $(\sigma_1,\sigma_2,\sigma_3)$ is a coordinate basis for $\mathbb{R}^3$, 
appearing in the study of Classical XYZ Gaudin Magnets \cite{BR98}. The system
$X=\partial/\partial \sigma_3$ is not a Lie--Hamilton system with respect to
this Poisson structure as $X$ is not Hamiltonian, namely
$\mathcal{L}_X\Lambda_{GM}\neq 0$. In addition, this system admits two
first-integrals $\sigma_1$ and $\sigma_2$. Nevertheless, their Lie bracket reads
$\{\sigma_1,\sigma_2\}=-\sigma_3$, which is not a first-integral for $X$. On the
other hand, consider the system
$$
Y=\sigma_3\frac{\partial}{\partial \sigma_2}+\sigma_2\frac{\partial}{\partial
\sigma_3}.
$$
This system is a Lie--Hamilton system, as it can be written in the form
$Y=-\widehat\Lambda_{GM}(d\sigma_1)$, and it possesses two first-integrals given
by $\sigma_1$ and $\sigma_2^2-\sigma_3^2$. Unsurprisingly,
$Y\{\sigma_1,\sigma^2_2-\sigma_3^2\}=0$, i.e. the Lie bracket of two
$t$-independent constants of motion is also a constant of motion. 

Let us prove some final interesting results about the $t$-independent constants
of
motion for Lie--Hamilton systems.
\begin{proposition} Let $X$ be a Lie--Hamilton system that admits a
Lie--Hamiltonian structure $(N,\Lambda,h)$. The function $f:N\rightarrow
\mathbb{R}$ is a constant of motion for $X$ if and only if $f$ Poisson commutes
with all elements of ${\rm
Lie}(\{h_t\}_{t\in\mathbb{R}},\{\cdot,\cdot\}_{\Lambda})$.
\end{proposition}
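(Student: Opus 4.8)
The plan is to prove the equivalence directly by unpacking the definition of a constant of motion and exploiting the anti-homomorphism $f \mapsto X_f = -\widehat\Lambda(df)$ together with the structure of $V^X$ supplied by Lemma~\ref{IsoRule}. First I would recall that $f$ is a $t$-independent constant of motion for $X$ precisely when $X_t f = 0$ for all $t \in \mathbb{R}$, and since $X_t = -\widehat\Lambda \circ dh_t$, this reads $X_t f = \{f, h_t\}_\Lambda = 0$ for all $t$. So the whole statement reduces to showing that $f$ Poisson commutes with every $h_t$ if and only if $f$ Poisson commutes with all of ${\rm Lie}(\{h_t\}_{t\in\mathbb{R}})$.

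One implication is immediate: if $f$ commutes with the entire Lie algebra ${\rm Lie}(\{h_t\}_{t\in\mathbb{R}})$, then in particular it commutes with each generator $h_t$, hence $X_t f = \{f, h_t\}_\Lambda = 0$ and $f$ is a constant of motion. For the converse I would argue that the subset of $C^\infty(N)$ consisting of functions Poisson-commuting with $f$ is itself a Lie subalgebra with respect to $\{\cdot,\cdot\}_\Lambda$ --- this is the key structural observation. Indeed, the set $Z_f = \{g \in C^\infty(N) \mid \{f,g\}_\Lambda = 0\}$ is closed under the Poisson bracket by the Jacobi identity: if $\{f,g_1\}_\Lambda = \{f,g_2\}_\Lambda = 0$, then $\{f,\{g_1,g_2\}_\Lambda\}_\Lambda = \{\{f,g_1\}_\Lambda,g_2\}_\Lambda + \{g_1,\{f,g_2\}_\Lambda\}_\Lambda = 0$. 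It is also clearly a linear subspace. Thus $Z_f$ is a Lie subalgebra of $(C^\infty(N),\{\cdot,\cdot\}_\Lambda)$.

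With this in hand the converse follows cleanly. Assuming $f$ is a constant of motion, we have $\{f, h_t\}_\Lambda = 0$ for every $t$, so $\{h_t\}_{t\in\mathbb{R}} \subset Z_f$. Since $Z_f$ is a Lie subalgebra and ${\rm Lie}(\{h_t\}_{t\in\mathbb{R}})$ is by definition the smallest Lie subalgebra containing $\{h_t\}_{t\in\mathbb{R}}$, we conclude ${\rm Lie}(\{h_t\}_{t\in\mathbb{R}}) \subset Z_f$, i.e. $f$ Poisson commutes with every element of ${\rm Lie}(\{h_t\}_{t\in\mathbb{R}})$. This completes the equivalence.

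The argument is short and I expect no serious obstacle; the only point requiring care is verifying closure of $Z_f$ under the Poisson bracket, for which the Jacobi identity for $\{\cdot,\cdot\}_\Lambda$ is exactly what is needed, and this is already available since $(C^\infty(N),\{\cdot,\cdot\}_\Lambda)$ is a Lie algebra. An equivalent and perhaps more conceptual phrasing would invoke the derivation property of $\{\cdot, f\}_\Lambda$ directly: the condition $\{f,h_t\}_\Lambda = 0$ propagates to all iterated brackets generating ${\rm Lie}(\{h_t\}_{t\in\mathbb{R}})$, which is precisely the reasoning already used in the proof of Proposition~\ref{NuX}. Either route gives the result, and I would favour the $Z_f$-is-a-subalgebra formulation for its brevity.
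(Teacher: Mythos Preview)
Your proof is correct and follows essentially the same approach as the paper: both reduce the statement to showing that Poisson-commutation with the generators $\{h_t\}$ propagates to the whole Lie algebra ${\rm Lie}(\{h_t\})$, and both use the Jacobi identity as the engine. Your packaging of this step via the centraliser $Z_f$ being a Lie subalgebra is a clean reformulation of the paper's inductive argument through iterated brackets, but the underlying idea is identical. (Minor remark: you mention Lemma~\ref{IsoRule} in your plan but never actually use it; it is not needed here.)
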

\begin{proof}
The function $f$ is a $t$-independent constant of motion for $X$ if and only if 
\begin{equation}\label{con2}
0=X_tf=\{f,h_t\}_{\Lambda},\qquad \forall t\in\mathbb{R}.
\end{equation}
From here,
$$
\{f,\{h_t,h_{t'}\}_{\Lambda}\}_{\Lambda}=\{\{f,h_t\}_{\Lambda},h_{t'}\}_{\Lambda
}+\{h_t,\{f,h_{t'}\}_{\Lambda}\}_{\Lambda}=0,\qquad \forall t,t'\in\mathbb{R},
$$
and inductively follows that $f$ Poisson commutes with all  successive Poisson
brackets of elements of $\{h_t\}_{t\in\mathbb{R}}$ and their linear combinations. 
As these elements span ${\rm
Lie}(\{h_t\}_{t\in\mathbb{R}})$, we get that $f$ Poisson commutes with ${\rm
Lie}(\{h_t\}_{t\in\mathbb{R}})$. 

Conversely, if $f$ Poisson commutes with ${\rm Lie}(\{h_t\}_{t\in\mathbb{R}})$,
it Poisson commutes with the elements $\{h_t\}_{t\in\mathbb{R}}$, and, in view
of (\ref{con2}), it becomes a constant of motion for $X$.
\end{proof}

In order to illustrate the above proposition, let us consider a Winternitz--Smorodinsky system 
(\ref{LieS}) with $n=2$. Recall that this system admits a Lie--Hamiltonian structure $({\rm T}^*\mathbb{R}_0^2,\Lambda,h=h_3+\omega^2(t)h_1)$, 
where $\Lambda=\sum_{i=1}^2\partial/\partial x_i\wedge\partial/\partial p_i$ is a  Poisson bivector 
on ${\rm T}^*\mathbb{R}_0^2$ and the functions $h_1,h_3$ are given within (\ref{HamWS}). 
For non-constant $\omega(t)$, it is easy to prove that 
${\rm Lie}(\{h_t\}_{t\in\mathbb{R}},\{\cdot,\cdot\}_\Lambda)$ is a real Lie algebra of functions isomorphic
to $\mathfrak{sl}(2,\mathbb{R})$ generated by the functions $h_1,h_2$ and $h_3$ detailed in (\ref{HamWS}).
When $\omega(t)=\omega_0\in\mathbb{R}$, the Lie algebra ${\rm Lie}(\{h_t\}_{t\in\mathbb{R}},\{\cdot,\cdot\}_\Lambda)$ becomes a one-dimensional
Lie subalgebra of the previous one. In any case, it is known that 
\begin{equation}\label{Inte}
I=(x_1p_2-p_1x_2)^2+k\left[\left(\frac{x_1}{x_2}\right)^2+\left(\frac{x_2}{x_1}\right)^2\right]
\end{equation}
is a $t$-independent constant of motion (cf. \cite{CLR08}). A simple calculation shows that
$$
\{I,h_\alpha\}_\Lambda=0,\qquad \alpha=1,2,3.
$$ 
Then, the function $I$ always Poisson commutes with the whole
Lie algebra ${\rm Lie}(\{h_t\}_{t\in\mathbb{R}},\{\cdot,\cdot\}_\Lambda)$, as expected. 

Obviously, every autonomous Hamiltonian system is a Lie--Hamilton system possessing a Lie--Hamiltonian structure
$(N,\Lambda,h)$ with $h$ being a time-independent Hamiltonian. Consequently, above proposition shows that the time-independent 
first-integrals for a Hamiltonian system are those functions that Poisson commute with its Hamiltonian, recovering as a particular
case this wide-known result. 

Moreover, above proposition suggests us that the r\^ole played by autonomous Hamiltonians for Hamiltonian systems is performed by the finite-dimensional Lie algebras of functions associated with Lie--Hamiltonian structures in the case of Lie--Hamilton systems. This can be employed, for instance, to study time-independent first-integrals of Lie--Hamilton systems or, more specifically, the maximal number of such first-integrals in involution, which would lead to the interesting analysis of integrability/superintegrability of Lie--Hamilton systems.

\begin{definition}
We say that a Lie system $X$ admitting a Lie--Hamilton structure $(N,\Lambda,h)$ possesses a compatible
 {\it strong comomentum map} with respect to this Lie--Hamilton structure if there
exists a Lie algebra morphism
$\lambda:V^X\rightarrow {\rm Lie}(\{h_t\}_{t\in\mathbb{R}},\{\cdot,\cdot \}_\Lambda)$ such that the following diagram:
$$
\xymatrix{&&V^X\ar[d]^\iota\ar[lld]_\lambda\\
{\rm Lie}(\{h_t\}_{t\in\mathbb{R}},\{\cdot,\cdot \}_\Lambda)\ar[rr]^{\widehat\Lambda\circ d}&&{\rm Ham}(N,\Lambda)
}
$$
where $\iota:V^X\hookrightarrow {\rm Ham}(N,\Lambda)$ is the natural injection
of $V^X$ into ${\rm Ham}(N,\Lambda)$, is commutative.
\end{definition}

Observe that, in this case, $X$ induces a Hamiltonian Lie group action $\varphi:G\times N\rightarrow N$ whose set of fundamental vector 
fields  lies in $V^X$ and admits a comomentum map $\lambda:V^X\rightarrow {\rm Lie}(\{h_t\}_{t\in\mathbb{R}},\{\cdot,\cdot \}_\Lambda)\subset C^\infty(N)$ that is a Lie algebra morphism, i.e.  we say that  $\varphi$ is a {\it strongly Hamiltonian action}. 
Note additionally that given $Y\in V^X$, we have $\widehat \Lambda\circ d\circ \lambda(Y)=Y$, i.e. $-\lambda(Y)$ is a Hamiltonian function for $Y$.

Conversely, it can easily be proved that if $X$ is a Lie system inducing such 
a strongly Hamiltonian Lie group action, then $X$ is a Lie--Hamilton system admitting a Lie--Hamiltonian structure $(N,\Lambda,-\lambda(X_t))$ that is compatible with a strong comomentum map $\lambda$.

It is important to note that if $X$ possesses a Lie--Hamiltonian structure $(N,\Lambda,h)$ compatible with a strong comomentum
 map $\lambda$, then the Lie algebra $\lambda(V^X)$ is isomorphic to $ V^X$  and it can readily be proved that  $X$ admits an additional 
Lie--Hamiltonian structure $(N,\Lambda,\bar h_t\equiv-\lambda (X_t))$ satisfying that $V^X\simeq {\rm Lie}(\{\bar h_t\}_{t\in\mathbb{R}})$ and admitting $\lambda$ as a compatible strong comomentum map. 

Let us provide a particular example of a strong comomentum map for 
a second-order Kummer--Schwarz equation in Hamiltonian form (\ref{Hamil2}) 
with a non-constant $\omega(t)$. 
Since the corresponding $t$-dependent vector field $X$ satisfies that $X_t=X_3+\omega^2(t)X_1$, where $X_1,X_3$ are given by (\ref{2KSVecFiel}), 
and in view of the relations (\ref{relWS}),
the Lie algebra $V^X$ is 
 isomorphic to $\mathfrak{sl}(2,\mathbb{R})$. Recall that this system admits a
Lie--Hamiltonian structure $({\rm T}^*\mathbb{R}_0,\Lambda,h_t=h_3+\omega^2(t)h_1)$, with $h_1$ and $h_3$ given in
(\ref{2KSVecFiel}). It is easy to see that
$X$ admits a strong 
comomentum map, relative to the previous Lie--Hamiltonian structure,  
$\lambda:V^X\rightarrow C^\infty({\rm T}^*\mathbb{R}_0)$ such that 
$\lambda(X_\alpha)=-h_\alpha$, where $X_1,X_2, X_3$ and $h_1,h_2,h_3$ are given by 
(\ref{2KSVecFiel}) and (\ref{FunKS}), correspondingly. As expected, $V^X$ and 
$\lambda(V^X)={\rm Lie}(\{h_t\}_{t\in\mathbb{R}},\{\cdot,\cdot\}_\Lambda)$
are isomorphic Lie algebras. A similar result can readily be obtained for $\omega(t)=\omega_0$, with $\omega_0\in\mathbb{R}$.
In this case, we now have that $V^X$ and the related 
${\rm Lie}(\{h_t\}_{t\in\mathbb{R}},\{\cdot,\cdot\}_\Lambda)$ become one-dimensional Lie algebras.

\begin{proposition}\label{linearization} Let $X$ be a Lie system possessing a Lie--Hamiltonian structure $(N,\Lambda,h)$
compatible with a strong comomentum map $\lambda$  such that $\dim\,\mathcal{D}^X_x=\dim
N=\dim V^X$ at a generic $x\in N$. Then, there exists a local coordinate system defined on a neighbourhood of each $x$
such that $X$ and $\Lambda$ are
simultaneously linearisable and where  $X$ possesses a linear superposition rule. 
\end{proposition}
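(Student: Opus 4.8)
The plan is to exhibit, near a generic $x$, an explicit coordinate chart built from the Hamiltonian functions supplied by the strong comomentum map, in which $\Lambda$ becomes a linear (Lie--Poisson) bivector and $X$ a homogeneous linear system. First I would replace the given Lie--Hamiltonian structure by the canonical one furnished in the remark preceding the statement, namely $(N,\Lambda,\bar h_t\equiv -\lambda(X_t))$, for which $\lambda$ is a Lie-algebra isomorphism onto ${\rm Lie}(\{\bar h_t\}_{t\in\mathbb{R}})$ and $V^X\simeq{\rm Lie}(\{\bar h_t\}_{t\in\mathbb{R}})$. Fixing a basis $X_1,\dots,X_r$ of $V^X$ with $[X_\alpha,X_\beta]=\sum_\gamma c_{\alpha\beta\gamma}X_\gamma$ and setting $h_\alpha:=-\lambda(X_\alpha)$, the morphism property of $\lambda$ yields the linear Poisson relations $\{h_\alpha,h_\beta\}_\Lambda=-\sum_\gamma c_{\alpha\beta\gamma}h_\gamma$, while compatibility gives $X_\alpha=-\widehat\Lambda(dh_\alpha)$ and hence $X_t=-\widehat\Lambda\circ dh_t$ with $h_t=\sum_\alpha b_\alpha(t)h_\alpha$.

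Next I would promote $h_1,\dots,h_r$ to a coordinate system. The hypothesis $\dim\mathcal{D}^X_x=\dim N$ together with the inclusion $\mathcal{D}^X\subset\mathcal{F}^\Lambda$ forces $\mathcal{F}^\Lambda_x=T_xN$, so $\widehat\Lambda_x$ is an isomorphism and $(N,\Lambda)$ is locally symplectic around $x$. Since moreover $\dim V^X=\dim N=r$ and $\dim\mathcal{D}^X_x=r$, the vectors $(X_1)_x,\dots,(X_r)_x$ form a local frame; applying $\widehat\Lambda_x^{-1}$ to the identities $X_\alpha=-\widehat\Lambda(dh_\alpha)$ shows that $(dh_1)_x,\dots,(dh_r)_x$ are linearly independent as well. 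Having $r=\dim N$ functions with pointwise independent differentials, the map $(h_1,\dots,h_r)$ is a local diffeomorphism onto an open set of $\mathbb{R}^r$, i.e.\ it defines a coordinate chart $y_\alpha:=h_\alpha$ near $x$. I expect this step---checking that the comomentum components genuinely assemble into a chart---to be the crux of the argument, since it is exactly here that all three equalities $\dim\mathcal{D}^X_x=\dim N=\dim V^X$ are consumed; once it is in hand, everything else is a direct computation.

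It then remains to read off both objects in these coordinates. The Poisson tensor is determined by $\{y_\alpha,y_\beta\}_\Lambda=-\sum_\gamma c_{\alpha\beta\gamma}y_\gamma$, which is linear in the $y_\gamma$ (the Lie--Poisson structure of $V^X$), so $\Lambda$ is linearised. For the dynamics, using $X_t f=\{f,h_t\}_\Lambda$ I would compute
\[
\frac{dy_\gamma}{dt}=X_t\,y_\gamma=\{y_\gamma,h_t\}_\Lambda=-\sum_{\alpha,\delta}b_\alpha(t)\,c_{\gamma\alpha\delta}\,y_\delta ,
\]
which exhibits $X$ in the homogeneous linear form $\dot y=A(t)y$ with $A(t)_{\gamma\delta}=-\sum_\alpha b_\alpha(t)c_{\gamma\alpha\delta}$; thus $X$ and $\Lambda$ are simultaneously linearised on this neighbourhood. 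Finally, a homogeneous linear system on $\mathbb{R}^r$ is the paradigmatic Lie system whose Vessiot--Guldberg Lie algebra consists of linear vector fields and which admits the linear superposition rule $y=\sum_{i=1}^r k_i\,y_{(i)}$ expressed through $r$ generic (linearly independent) particular solutions $y_{(1)},\dots,y_{(r)}$; pulling this rule back through the chart $(h_1,\dots,h_r)$ yields the asserted linear superposition rule for $X$.
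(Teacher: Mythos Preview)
Your proposal is correct and follows essentially the same approach as the paper: both use the comomentum-map components $h_\alpha$ as local coordinates, verify their independence via the dimension hypotheses, and then read off the linearity of $\Lambda$ and $X_t$ from the Lie-algebra relations among the $h_\alpha$. Your intermediate observation that $\widehat\Lambda_x$ is an isomorphism (local symplecticity) is a helpful clarification that the paper leaves implicit when arguing $dh_1\wedge\cdots\wedge dh_n\neq 0$.
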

\begin{proof}
As it is assumed that $n\equiv\dim N=\dim V^X=\dim \mathcal{D}^X_x$ at a generic $x$, every basis $X_{1},\ldots,X_{n}$
of $V^X$ gives rise to a basis for the tangent bundle $TN$ on a neighbourhood of $x$. 
Since $X$ admits a strong comomentum map compatible with $(N,\Lambda,h)$, 
we have $(V^X,[\cdot,\cdot])\simeq (\lambda(V^X),\{\cdot,\cdot\}_\Lambda)$ and the family of 
functions, $h_\alpha=\lambda(X_\alpha)$, with $\alpha=1,\ldots,n,$ form a basis for 
the Lie subalgebra $\lambda(V^X)$. Moreover, since $\widehat\Lambda\circ d\circ \lambda (V^X)=V^X$ and 
$\dim\,V^X=\dim \mathcal{D}^X_{x'}$  for $x'$ 
 in a neighbourhood of $x$,
 then $\widehat \Lambda_{x'}\circ d (\lambda(V^X))\simeq T_{x'}N$
 and $dh_1\wedge \ldots \wedge dh_n\neq 0$ at a generic point. Hence, the set $(h_1,\ldots,h_n)$ is a
coordinate system on an open dense subset of $N$. Now, using again that $(\lambda(V^X), \{\cdot,\cdot\}_\Lambda)$ 
is a real Lie algebra, the Poisson bivector
$\Lambda$ can be put in the form
\begin{equation}\label{linear}
\Lambda=\frac 12\sum_{i,j=1}^n\{h_i,h_j\}_\Lambda\frac{\partial}{\partial
h_i}\wedge\frac{\partial}{\partial h_j}=\frac
12\sum_{i,j,k=1}^nc_{ijk}h_k\frac{\partial}{\partial
h_i}\wedge\frac{\partial}{\partial h_j},
\end{equation}
for certain real $n^3$ constants $c_{ijk}$. In other words, the Poisson bivector
$\Lambda$ becomes linear in the chosen coordinate system.

Since we can write $X_t=-\widehat\Lambda (d\bar h_t)$, with $\bar{h}_t=-\lambda (X_t)$ being a curve in the Lie
 algebra $\lambda(V^X)\subset {\rm
Lie}(\{h_t\}_{t\in\mathbb{R}})$, expression (\ref{linear}) yields
$$
\begin{aligned}
X_t&=-\widehat\Lambda(d\bar h_t)=-\widehat\Lambda\circ
d\left(\sum_{l=1}^nb_l(t)h_l\right)\\&=-\sum_{l=1}^nb_l(t)(\widehat\Lambda\circ
dh_l)=-\sum_{l,j,k=1}^nb_l(t)c_{ljk}h_k\frac{\partial}{\partial h_j},
\end{aligned}
$$
and $X_t$  is  linear in this  coordinate system. Consequently, as every linear system, $X$ admits a linear superposition rule in the coordinate system
$(h_1,\ldots,h_n)$.
\end{proof}

Let us turn to describing some features of $t$-independent Lie symmetries for
Lie--Hamilton systems. Our exposition will be based upon the properties of the
hereafter called {\it symmetry distribution}.

\begin{definition} Given a Lie--Hamilton system $X$ that possesses a Lie--Hamiltonian structure $(N,\Lambda,h)$, we
define its {\it symmetry distribution}, $\mathcal{S}^X_\Lambda$, by
$$
(\mathcal{S}^X_\Lambda)_x=\widehat{\Lambda}_x(\mathcal{V}_x^X)\in T_xN,\qquad
x\in N.
$$
\end{definition}

As its name indicates, the symmetry distribution can be employed to investigate the $t$-independent Lie symmetries of a Lie--Hamilton system.
Let us give some basic examples of how this can be done.

\begin{proposition} Given a Lie--Hamilton system $X$ with a Lie--Hamiltonian
structure $(N,\Lambda,h)$, then:
\begin{enumerate}
 \item The symmetry distribution $\mathcal{S}^X_\Lambda$ associated with $X$ and $\Lambda$ is involutive on $U^X$, i.e. the open dense subset of $N$ where $\mathcal{V}^X$ is differentiable.
\item  If $f$ is a $t$-independent constant of motion for $X$, then $\widehat
\Lambda(df)$ is a $t$-independent Lie symmetry of $X$.
\item The distribution $\mathcal{S}^X_\Lambda$ admits a local basis of $t$-independent Lie
symmetries of $X$ defined around a generic point of $N$. The elements of such a basis
are Hamiltonian vector fields of $t$-independent constants of motion of $X$.
\end{enumerate}
\end{proposition}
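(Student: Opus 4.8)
The plan is to prove the three claims in the order (2), (1), (3), treating the second as the computational core on which the other two rest. For (2), I would use that $X$ admits the Lie--Hamiltonian structure $(N,\Lambda,h)$, so that $X_t=-\widehat\Lambda(dh_t)$ coincides with the Hamiltonian vector field $X_{h_t}$, while $\widehat\Lambda(df)=-X_f$. Recalling the anti-homomorphism $X_{\{f,g\}}=-[X_f,X_g]$ established in the preliminaries, a short computation gives $[\widehat\Lambda(df),X_t]=[-X_f,X_{h_t}]=X_{\{f,h_t\}}$. Since $f$ is a $t$-independent constant of motion, $X_tf=\{f,h_t\}_\Lambda=0$ for every $t$, whence $[\widehat\Lambda(df),X_t]=0$ and $\widehat\Lambda(df)$ is a $t$-independent Lie symmetry of $X$. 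I expect this step to be entirely routine once the sign conventions are fixed; the only care needed is in checking that $X_t=X_{h_t}$ and that the constant-of-motion condition translates into the vanishing bracket $\{f,h_t\}_\Lambda=0$.

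For (1), I would exploit the Lie algebroid structure $(T^*N,[\cdot,\cdot]_\Lambda,\widehat\Lambda)$, whose anchor $\widehat\Lambda$ is a Lie algebra morphism, i.e. $[\widehat\Lambda(\omega),\widehat\Lambda(\omega')]=\widehat\Lambda([\omega,\omega']_\Lambda)$. By Theorem \ref{IntLie}, $\mathcal{V}^X|_{U^X}$ is involutive for $[\cdot,\cdot]_\Lambda$, so for $\omega,\omega'\in\mathcal{V}^X$ we have $[\omega,\omega']_\Lambda\in\mathcal{V}^X$ and therefore $[\widehat\Lambda(\omega),\widehat\Lambda(\omega')]\in\widehat\Lambda(\mathcal{V}^X)=\mathcal{S}^X_\Lambda$. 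To upgrade this into genuine involutivity of the generalised distribution $\mathcal{S}^X_\Lambda$, I would fix the local basis $df_1,\ldots,df_{p(x)}$ of $\mathcal{V}^X$ furnished by Lemma \ref{basisVX} on a neighbourhood in $U^X$; then $\widehat\Lambda(df_1),\ldots,\widehat\Lambda(df_{p(x)})$ is a finite local generating set for $\mathcal{S}^X_\Lambda$, and since $[\widehat\Lambda(df_i),\widehat\Lambda(df_j)]=\widehat\Lambda(d\{f_i,f_j\}_\Lambda)$ with $\{f_i,f_j\}_\Lambda$ again a constant of motion (the Poisson-algebra statement of Theorem \ref{IntLie}), every such bracket lands in $\mathcal{S}^X_\Lambda$. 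Involutivity for arbitrary sections then follows by expanding them in this generating set and observing that the resulting derivative terms also take values in $\mathcal{S}^X_\Lambda$. The main obstacle, as I see it, lies precisely here: $\widehat\Lambda$ may be degenerate, so $\mathcal{S}^X_\Lambda$ need not be regular and the images $\widehat\Lambda(df_i)$ need not be linearly independent; one must therefore argue involutivity through a finite local generating set rather than through a frame, which is exactly why restricting to $U^X$ (where, by Lemma \ref{basisVX} and the subsequent Note, $\mathcal{V}^X$ is differentiable with a basis of exact forms) is essential.

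Finally, for (3), I would again invoke Lemma \ref{basisVX} to obtain, around a generic point of $U^X$, the local basis $df_1,\ldots,df_{p(x)}$ of $\mathcal{V}^X$ with each $f_i$ a $t$-independent constant of motion, so that $\widehat\Lambda(df_1),\ldots,\widehat\Lambda(df_{p(x)})$ span $\mathcal{S}^X_\Lambda$. On the open dense subset where $\mathcal{S}^X_\Lambda$ has locally constant rank I would extract a maximal linearly independent subset to serve as a local basis; by construction each of its elements equals $-X_{f_i}$, the Hamiltonian vector field of the constant of motion $f_i$, and by part (2) each such field is a $t$-independent Lie symmetry of $X$. This produces the desired local basis of $\mathcal{S}^X_\Lambda$ made of symmetries that are Hamiltonian vector fields of $t$-independent constants of motion, completing the proof.
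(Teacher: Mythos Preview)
Your proposal is correct and follows essentially the same strategy as the paper: part (2) is the same direct bracket computation using $X_t=-\widehat\Lambda(dh_t)$ and the anchor/anti-homomorphism property; part (1) combines the involutivity of $\mathcal{V}^X|_{U^X}$ from Theorem~\ref{IntLie} with the fact that $\widehat\Lambda$ is a Lie algebra morphism; and part (3) invokes Lemma~\ref{basisVX} to obtain a spanning family $\widehat\Lambda(df_i)$ of Hamiltonian Lie symmetries from which a local basis is extracted. The only notable difference is that you are more explicit than the paper about the degeneracy of $\widehat\Lambda$ in part (1), arguing via a finite local generating set rather than assuming sections of $\mathcal{S}^X_\Lambda$ lift directly to sections of $\mathcal{V}^X$; this is a welcome bit of care but does not change the underlying argument.
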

\begin{proof}
By definition of $\mathcal{S}^X_\Lambda$ and using that $\mathcal{V}^X$ has constant rank on the 
connected components of $U_X$, we can ensure that 
given two vector fields in $Y_1,Y_2\in \mathcal{S}^X_\Lambda|_{U_X}$, there exist two forms
$\omega,\omega'\in\mathcal{V}^X|_{U_X}$ such that $Y_1=\widehat\Lambda(\omega)$,
$Y_2=\widehat\Lambda(\omega')$. Since $X$ is a Lie--Hamilton system, 
$\mathcal{V}^X|_{U_X}$ is involutive and $\widehat \Lambda$ is an anchor, i.e. a Lie algebra 
morphism from $(\Gamma(\pi_N),[\cdot,\cdot]_\Lambda)$ to $(\Gamma(\tau_N),[\cdot,\cdot])$, then 
$$
[Y_1,Y_2]=[\widehat\Lambda (w),\widehat\Lambda (w')]=\widehat\Lambda
([w,w']_\Lambda)\in\mathcal{S}^X_\Lambda.
$$ 
In other words, since $\mathcal{V}^X$ is involutive on $U_X$, then
$\mathcal{S}^X_\Lambda$ is so, which proves $(1)$.

To prove $(2)$, note that
$$
[X_t,\widehat \Lambda(df)]=-[\widehat\Lambda (dh_t),\widehat
\Lambda(df)]=-\widehat\Lambda (d\{h_t,f\}_\Lambda)=\widehat\Lambda [d(X_tf)]=0.
$$
Finally, the proof of $(3)$ is based upon the fact that $\mathcal{V}^X$ admits,
around a point $x\in U^X\subset N$, a local basis of one-forms 
$df_1,\ldots, df_{p(x)}$, with $f_1,\ldots,f_{p(x)}$ being a family of
$t$-independent constants of motion for $X$ and
$p(x)=\dim\,N-\dim\mathcal{D}_x^X$. From $(2)$, the vector fields $X_{f_1},\ldots, X_{f_{p(x)}}$
form a family of Lie symmetries of $X$ locally spanning $\mathcal{S}^X_\Lambda$.
Hence, we can easily choose among them a local basis for
$\mathcal{S}^X_\Lambda$. 
\end{proof}

As a particular example of the usefulness of the above result, 
let us turn to a two-dimensional
Winternitz--Smorodinsky oscillator $X$ given by (\ref{LieS}) 
and its known constant of motion (\ref{Inte}). In view of the previous proposition, 
$Y=\widehat\Lambda(dI)$ must be
a Lie symmetry for these systems. A little calculation leads to	
\begin{multline*}
Y=2(x_1p_2-p_1x_2)\left(x_2\frac{\partial}{\partial x_1}-x_1\frac{\partial}{\partial x_2}\right)+\\
2\left[(x_1p_2-p_1x_2)p_2+k\frac{x_1^4-x_2^4}{x_1^3x_2^2}\right]\frac{\partial}{\partial p_1}
-2\left[(x_1p_2-p_1x_2)p_1+k\frac{x_1^4-x_2^4}{x_2^3x_1^2}\right]\frac{\partial}{\partial p_2},
\end{multline*}
and it is straightforward to verify that $Y$ commutes with $X_1,X_2,X_3$, given by (\ref{VGSec}), and therefore with every $X_t$, with $t\in\mathbb{R}$, i.e. $Y$ is a Lie symmetry for $X$.
\begin{proposition} Let $X$ be a Lie--Hamilton system possessing a Lie--Hamiltonian
structure $(N,\Lambda,h)$. If $[V^X,V^X]=V^X$ and $Y\in {\rm Ham}(N,\Lambda)$ is
a Lie symmetry of $X$, then $Y\in \mathcal{S}_\Lambda^X$.
\end{proposition}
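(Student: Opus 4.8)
The plan is to reduce the membership $Y\in\mathcal{S}^X_\Lambda$ to the assertion that a Hamiltonian function for $Y$ is a $t$-independent constant of motion. Recall that $(\mathcal{S}^X_\Lambda)_x=\widehat\Lambda_x(\mathcal{V}^X_x)$ and that, by Proposition \ref{NuX}, a function $f$ is a constant of motion for $X$ precisely when $df\in\mathcal{V}^X$. Since $Y\in{\rm Ham}(N,\Lambda)$, I may write $Y=X_f=-\widehat\Lambda(df)$ for some $f\in C^\infty(N)$. So it suffices to show that $f$ is a constant of motion: once $df\in\mathcal{V}^X$, linearity of $\widehat\Lambda_x$ and of the subspace $\mathcal{V}^X_x$ give $Y_x=\widehat\Lambda_x(-(df)_x)\in\widehat\Lambda_x(\mathcal{V}^X_x)=(\mathcal{S}^X_\Lambda)_x$, whence $Y\in\mathcal{S}^X_\Lambda$.

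First I would extend the symmetry condition from the $X_t$ to all of $V^X$. As a $t$-independent Lie symmetry, $Y$ satisfies $[X_t,Y]=0$ for every $t$; since the set of vector fields commuting with $Y$ is a Lie subalgebra of $\Gamma(\tau_N)$ containing $\{X_t\}_{t\in\mathbb{R}}$, it contains $V^X={\rm Lie}(\{X_t\}_{t\in\mathbb{R}})$, so $[Y,Z]=0$ for all $Z\in V^X$. Because $X$ is a Lie--Hamilton system, each such $Z$ is Hamiltonian, $Z=X_g$ for some $g$, and the anti-homomorphism $f\mapsto X_f$ yields $0=[X_f,X_g]=-X_{\{f,g\}}$. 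Hence $\{f,g\}$ is a Casimir function whenever $X_g\in V^X$. This is weaker than the desired $Zf=\{f,g\}=0$.

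The crucial step is to promote ``Casimir'' to ``zero'' via the hypothesis $[V^X,V^X]=V^X$. Any $Z\in V^X$ can then be written as a finite sum of brackets $[Z_1,Z_2]$ with $Z_1=X_{g_1},Z_2=X_{g_2}\in V^X$, so on each summand $Z=-X_{\{g_1,g_2\}}=X_g$ with $g=-\{g_1,g_2\}$. Using that $\{f,\cdot\}$ is a derivation of the Poisson bracket (the Jacobi identity),
$$
Zf=\{f,g\}=-\{f,\{g_1,g_2\}\}=-\{\{f,g_1\},g_2\}-\{g_1,\{f,g_2\}\}.
$$
Here $\{f,g_1\}$ and $\{f,g_2\}$ are Casimir functions by the previous step, and Casimir functions Poisson-commute with everything (from $X_C=0$ one gets $\{C,\cdot\}=0$); therefore both terms on the right vanish and $Zf=0$. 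Summing over the summands gives $Zf=0$ for every $Z\in V^X$, in particular $X_tf=0$ for all $t$, so $f$ is a $t$-independent constant of motion and the reduction of the first paragraph completes the proof.

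The main obstacle is exactly the gap between $\{f,g\}$ being Casimir and being zero: commutation of $Y$ with $V^X$ by itself does not force the Hamiltonian of $Y$ to be a constant of motion, and one genuinely needs the perfectness assumption $[V^X,V^X]=V^X$ together with the fact that Casimirs lie in the Poisson centre in order to close this gap.
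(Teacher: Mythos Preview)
Your proof is correct and follows essentially the same route as the paper's: show that $Zf$ (equivalently $\{f,g\}$ for $Z=X_g$) is a Casimir for every $Z\in V^X$, then use perfectness $[V^X,V^X]=V^X$ together with the fact that Casimirs lie in the Poisson centre to conclude $Zf=0$, hence $df\in\mathcal{V}^X$ and $Y\in\mathcal{S}^X_\Lambda$. The only cosmetic differences are that you extend from $\{X_t\}$ to $V^X$ via the centraliser-is-a-subalgebra observation and invoke the Jacobi identity for $\{\cdot,\cdot\}$, whereas the paper phrases the same steps through the vector-field commutator $[Z_2,Z_1]f=Z_2Z_1f-Z_1Z_2f$.
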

\begin{proof}
As $Y$ is a $t$-independent Lie symmetry, then  $[Y,X_t]=0$ for every $t\in\mathbb{R}$. Since $Y$ is a
Hamiltonian vector field, then $Y=-\widehat\Lambda\circ df$ for a certain $f\in
C^\infty(N)$. Using that $X_t=-\widehat \Lambda(dh_t)$, we obtain
$$
0=[Y,X_t]=[\widehat\Lambda (df),\widehat \Lambda (dh_t)]=\widehat \Lambda
(d\{f,h_t\}_\Lambda)=\widehat \Lambda [d (X_tf)].
$$
Hence, $X_tf$ is a Casimir function. Therefore, as every $X_{t'}$ is a Hamiltonian vector field for all $t'\in\mathbb{R}$, it turns out that $X_{t'}X_tf=0$ for every
$t,t'\in \mathbb{R}$ and, in consequence, $Z_1f$ is a Casimir function for every
$Z_1\in V^X$. Moreover,  as every $Z_2\in V^X$ is Hamiltonian, we have
$$
Z_2Z_1f=Z_1Z_2f=0\Longrightarrow (Z_2Z_1-Z_1Z_2)f=[Z_2,Z_1]f=0.
$$
As $[V^X,V^X]=V^X$, every element $Z$ of $V^X$ can be written as the commutator of two elements of $V^X$ and, in view of the above expression,  $Zf=0$ which shows that $f$ is a $t$-independent
constant of motion for $X$. Finally, as $Y=-\widehat \Lambda(df)$, then $Y\in
\mathcal{S}^X_\Lambda$.
\end{proof}
Note that, roughly speaking, the above proposition ensures that, when $V^X$ is
{\it perfect}, i.e. $[V^X,V^X]=V^X$ (see \cite{Ca03}), then $\mathcal{S}_\Lambda^X$ contains all Hamiltonian Lie symmetries
of $X$. This is the case for Winternitz--Smorodinsky systems (\ref{LieS}) with a non-constant $\omega(t)$, whose
$V^X$ was already shown to be isomorphic to $\mathfrak{sl}(2,\mathbb{R})$.

\section{Conclusions and Outlook}
We have laid down the background for the analysis of a class of systems of
first-order ordinary differential equations whose dynamic is determined by a
curve in a Lie algebra of Hamiltonian vector fields, i.e. the Lie--Hamilton
systems. We proved that these systems can be described through curves in finite-dimensional real  Lie
algebras of functions on a Poisson manifold, the Lie--Hamiltonian structures.
Such structures have been employed to study features of Lie--Hamilton systems,
e.g. linearisability conditions, constants of motion, Lie symmetries,
superposition rules, etc. All our methods and results have been illustrated by
examples of mathematical and physical interest.

Apart from the results derived within this work, there remains a big deal of
further properties to be analysed: the existence of several Lie--Hamiltonian
structures for a system, the study of conditions for the existence of
Lie--Hamilton systems, methods to derive superposition rules, the analysis of
integrable and superintegrable Lie--Hamilton systems, etc. We plan to investigate all these
topics in the future.

\section*{Acknowledgements} 

Research of J.F. Cari\~nena and  J. de Lucas is partially financed by research projects MTM2009-11154 and MTM2010-12116-E (Ministerio de Ciencia e Innovaci\'on) and E24/1 (Gobierno de Arag\'on). J. de Lucas also acknowledges financial support by Gobierno de Arag\'on under project  FMI43/10
to accomplish a research stay in the University of Zaragoza. C. Sard\'on acknowledges a fellowship provided by the
University of Salamanca and partial financial support by research project FIS2009-07880 (Direcci\'on General de Investigaci\'on, Ciencia y Tecnolog\'ia). Finally, J. de Lucas would like to thank profs. Flores-Espinoza, Y. Vorobiev and G. D\'avila-Rasc\'on for their useful comments and valuable remarks.

\end{document}